\documentclass[a4paper,11pt]{amsart}

\usepackage{times} 
\usepackage{a4wide}
\usepackage{booktabs}
\usepackage{url}
\usepackage{hyperref}
\usepackage[english]{babel} 

\theoremstyle{plain}
\newtheorem{thm}{Theorem}
\newtheorem{theorem}[thm]{Theorem}
\newtheorem{proposition}[thm]{Proposition}
\newtheorem{corollary}[thm]{Corollary}
\newtheorem{lemma}[thm]{Lemma}

\theoremstyle{definition}
\newtheorem{definition}{Definition}
\newtheorem{example}{Example}

\theoremstyle{remark}
\newtheorem{remark}{Remark}

\usepackage[latin1]{inputenc}
\usepackage{amsmath}
\usepackage{amssymb}
\usepackage{amsfonts}
\usepackage{euscript}
\usepackage{fancyhdr}
\usepackage{graphicx}
\usepackage{color}
\usepackage{watermark}
\usepackage{fancybox}
\usepackage{colortbl}
\usepackage{empheq}
\usepackage[ruled,vlined,linesnumbered]{algorithm2e}
\usepackage{mathrsfs,amsxtra}
\usepackage{bm}

\mathchardef\hyphen="2D


\newcommand{\F}{{{\mathbb F}}}


\newcommand{\bigO}{\mathcal{O}}






\newcommand*\PROB\Pr

%

\DeclareMathOperator{\supp}{supp}
\DeclareMathOperator{\rk}{rk}
\DeclareMathOperator{\frk}{frk}
\DeclareMathOperator{\erk}{erk}
\DeclareMathOperator{\row}{row}
\DeclareMathOperator{\col}{col}
\DeclareMathOperator{\Ve}{vec}
\DeclareMathOperator{\ext}{ext}

 

%
%


\newcommand\reallywidehat[1]{\arraycolsep=0pt\relax%
    \begin{array}{c}
        \stretchto{
            \scaleto{
                \scalerel*[\widthof{\ensuremath{#1}}]{\kern-.5pt\bigwedge\kern-.5pt}
                {\rule[-\textheight/2]{1ex}{\textheight}} 
            }{\textheight} %
        }{0.5ex}\\           
        #1\\                 
        \rule{-1ex}{0ex}
    \end{array}
}






\newcommand{\word}[1]{\mathbf{#1}}
\newcommand{\av}{\word{a}}
\newcommand{\bv}{\word{b}}
\newcommand{\cv}{\word{c}}
\newcommand{\ev}{\word{e}}

\newcommand{\rv}{\word{r}}

\newcommand{\xv}{\word{x}}

\newcommand{\zz}{\word{0}}

\newcommand{\mat}[1]{\boldsymbol{#1}}
\newcommand{\Am}{\mat{A}}

\newcommand{\Em}{\mat{E}}

\newcommand{\Hm}{\mat{H}}
\renewcommand{\Im}{\mat{I}}

\newcommand{\Mm}{\mat{M}}

\newcommand{\Pm}{\mat{P}}

\newcommand{\Qm}{\mat{Q}}

\newcommand{\Tm}{\mat{T}}











\setlength{\doublerulesep}{0pt}







\definecolor{MyBlue}{rgb}{0.00,0.00,0.60}

\title{Low-Rank Parity-Check Codes Over Finite Commutative Rings}

\author{Hermann Tchatchiem Kamche}
	\address{Hermann Tchatchiem Kamche is with the Department of Mathematics, Faculty of Science, University of Yaounde I, Cameroon.} 
  	\email{hermann.tchatchiem@gmail.com}  
\author{Herv\'e Tal\'e Kalachi}	
	\address{ Herv\'e Tal\'e Kalachi is with the Department of Computer Engineering, National Advanced School of Engineering of Yaound\'e, University of Yaounde I, Cameroon.} 
	\email{herve.tale@univ-yaounde1.cm}	
\author{Franck Rivel Kamwa Djomou}
	\address{ Franck Rivel Kamwa Djomou is with the Department of Mathematics and Computer Science, Faculty of Science, University of Dschang, Cameroon.}
	\email{rivelkamwa@yahoo.com}
\author{Emmanuel Fouotsa}	
	\address{Emmanuel Fouotsa is with the Department of Mathematics, Higher Teacher Training College, University of Bamenda, Cameroon.}
	\email{emmanuelfouotsa@yahoo.fr}

\begin{document}
\begin{abstract}
Low-Rank Parity-Check (LRPC) codes are a class of rank metric codes that have many applications specifically in network coding and cryptography. Recently, LRPC codes have been extended to Galois rings which are a specific case of finite rings. In this paper, we first define LRPC codes over finite commutative local rings, which are bricks of finite rings, with an efficient decoder. We improve the theoretical bound of the failure probability of the decoder. Then, we extend the work to arbitrary finite commutative rings.  Certain conditions are generally used to ensure the success of the decoder. Over finite fields, one of these conditions  is  to choose a prime number as the extension degree of the Galois field. We have shown that one can construct LRPC codes without this condition on the degree of Galois extension.

\end{abstract}
\maketitle

%
\section{Introduction}\label{sec:intro}
LRPC codes over finite fields were first introduced in \cite{GMRZ13} simultaneously with an application to cryptography. These codes have two peculiarities which make them attractive enough for cryptographic applications. First, LRPC codes are rank metric codes \cite{GPT91}. In code-based cryptography, this metric is very famous because it makes generic attacks \cite{P62,BJMM12} in the context of Hamming metric impracticable. Therefore, decoding attacks are more expensive in practice for rank metric compared to Hamming metric. Second, LRPC codes are by definition analogues of low-density parity-check codes \cite{B14}. Thus, they have a poor algebraic structure, which makes several known structural attacks \cite{G95,O08,HMR17,OTN18,K20} impracticable.

Rank metric based cryptography and LRPC codes have received a lot of attention these recent years thanks to the NIST \footnote{National Institute of Standards and Technology} competition for Post-Quantum Cryptography Standardization, where several code-based candidates using LRPC codes went up to the second round of the competition. However, it is difficult to ignore the existence of a link between the absence of a candidate from rank-based cryptography in the third round of this competition and the recent generic attacks \cite{BBBGNRT20,BBCGPSTV20} which considerably improved the cost of all known decoding attacks. And this, even if we are only talking about a reduction of security levels of rank-based candidates. Note that the attacks from \cite{BBBGNRT20,BBCGPSTV20} exploit the fact that the code used is defined on a finite field. A natural question that generally arises is that of knowing what would become these attacks if we are dealing with a code defined on another poorer algebraic structure, such as finite rings. In this case of finite rings, it is interesting to precede such a question by defining LRPC codes over finite rings.


In the context of network coding, El Qachchach et al. \cite{El2018efficient} used LRPC codes to improve error correction in multi-source network coding over finite fields. Nazer et al. \cite{Nazer2011compute} and Feng et al. \cite{Feng2013algebraic} also showed through their work on nested-lattice-based network coding that one can construct more efficient physical-layer network coding schemes over finite principal ideal rings. This brilliant algebraic approach thus motivates the study of LRPC codes over finite rings. 

It was recently shown in \cite{Renner2020low} that LRPC codes can be generalized to the ring of integers modulo a prime power. This work was followed by \cite{RNP20} where the authors generalized the results of \cite{Renner2020low} to Galois rings, and the work in \cite{DKF21} that gives another generalization of LRPC codes to the rings of integers modulo a positive integer.

In this paper, we use the structure theorem for finite commutative ring \cite{Mcdonald1974finite} to generalize the definition of LRPC codes over finite commutative rings, and provide an efficient decoder. We also show that the three conditions usually used to evaluate the success probability of the decoding can be reduced to only two conditions and, these two conditions allow improving the theoretical bound of the failure probability of the decoder. Furthermore, we show that constructions of LRPC codes over finite fields can be done without choosing a prime number as the extension degree of the Galois field, as it was recommended in previous works.

The rest of the paper is organized as follows: in Section \ref{sec:prelims}, we give some properties and mathematical notions that will be needed throughout the paper, that is, the resolution of linear systems, the rank, the free-rank, envelope-rank, and the product of two submodules in the Galois extension of finite commutative local rings. In Section \ref{sec:lrpc_flr}, we define LRPC codes over finite commutative local rings, and provide a decoding algorithm with its complexity. We also give an upper bound for the failure probability of the decoding algorithm and finally, we generalize the definition of LRPC codes to finite commutative rings in Section \ref{sec:lrpc_fr}. 

\section{Some Properties of Modules Over Finite Local Rings}\label{sec:prelims}
Let $R$ be a finite commutative ring. The set of units in $R$ will be denoted by $R^{\ast }$. The set of all $m\times n$ matrices with entries from $R$ will be denoted by $R^{m\times n}$. The $k\times k$ identity matrix is denoted by $\Im_{k}$. For $\Am \in R^{m\times n}$, we denote by $\row \left( \Am \right)$ and $\col \left(\Am \right)$ the $R-$submodules generated by the row and column vectors of $\Am$, respectively. The transpose of $\Am $ is denoted by $\Am ^{\top}$. Let $\left\{ \av_{1},\ldots ,\av_{r}\right\} $ be a subset of $R^{n}$. The $R-$submodule of $R^{n}$ generated by $\left\{ \av_{1},\ldots ,\av_{r}\right\}$ is denoted by $\left\langle \av_{1},\ldots ,\av_{r}\right\rangle _{R}$. Recall that $\left\{ \av_{1},\ldots ,\av_{r}\right\} $ is linearly independent over $R$ if whenever $\alpha _{1}\av_{1}+\cdots +\alpha _{r}\av_{r}=\zz$ for some $\alpha _{1},\ldots ,\alpha _{r}\in R$, then $\alpha_{1}=0,\ldots ,\alpha _{r}=0$. If $\left\{ \av_{1},\ldots ,\av_{r}\right\} $ is linearly independent, then we say that it is a basis of the free module $\left\langle \av_{1},\ldots ,\av_{r}\right\rangle _{R}$.

A local ring is a ring with exactly one maximal ideal. As an example, the
ring $\mathbb{Z}_{p^{k}}$ of integers modulo $p^{k}$, where $p$ is a prime number, is a
local ring with maximal ideal $p\mathbb{Z}_{p^{k}}$. By \cite[Theorem VI.2]{Mcdonald1974finite}, each finite commutative ring can be decomposed as a direct sum of local rings. Using this decomposition, some results over local rings can naturally be extended to finite rings. Thus, in this section, we will first study some properties that will be needed over finite commutative local rings.

In this section, we assume that $R$ is a local ring with maximal ideal $\mathfrak{m}$ and
 residue field $\F_{q}=R/\mathfrak{m}$. By \cite{Mcdonald1974finite}, there is a positive integer $\upsilon $ such that $\left\vert R\right\vert =q^{\upsilon }$ and $\left\vert \mathfrak{m}\right\vert =q^{\upsilon -1}$, where $\left\vert R\right\vert $ and $\left\vert \mathfrak{m}\right\vert $ are respectively the cardinality of $R$ and $\mathfrak{m}$. Set $q=p^{\mu }$ where $p$ is a prime number. Then the characteristic of $R$ is $p^{\varsigma }$ where $\varsigma $ is a non
negative integer. As in \cite{Bulyovszky2017polynomial}, there is a subring $R_{0}$ of $R$ such that $R_{0}$ is isomorphic to the Galois ring of characteristic $p^{\varsigma }$ and cardinality $p^{\mu \varsigma }$, that
is, a ring of the form $\mathbb{Z}_{p^{\varsigma }}\left[ X\right] /\left( h_{\mu }\right) $ where $h_{\mu}\in
\mathbb{Z}_{p^{\varsigma }}\left[ X\right] $ is a monic polynomial of degree $\mu$, such that its projection on the ring $\mathbb{Z}_{p}\left[ X\right]$ is irreducible. Considering $R$ as a $R_{0}-$module, there are $z_{1},\ldots ,z_{\gamma }\in R$ such that $R=R_{0}z_{1}\oplus \cdots \oplus R_{0}z_{\gamma }$. The natural projection $R\rightarrow R/\mathfrak{m}$ is denoted by $\Psi$. We extend $\Psi $ coefficient-by-coefficient as a map from $R^{n}$ to $\F_{q}^{n}$.

By \cite{Mcdonald1974finite}, $R$ admits a Galois extension of dimension $m$, denoted by $S$. Moreover, $S$ is also a local ring, with maximal ideal $\mathfrak{M}=\mathfrak{m}S$ and residue field $\F_{q^{m}}$. So, $S$ is a free $R-$module of rank $m$.  The natural projection $S\rightarrow S/\mathfrak{m}S$ is also denoted by $\Psi$. We also extend $\Psi $ coefficient-by-coefficient as a map from $S^{n}$ to $\F_{q^m}^{n}$.  

\subsection{Linear Equations Over Finite Local Rings \label{ss1}}

In the decoding algorithm, we will solve linear equations. In \cite{Bulyovszky2017polynomial}, 
Bulyovszky and Horv\'ath gave a good method to solve linear equations over finite local rings. Consider the linear system
\begin{equation}
\Am \xv = \bv  \label{LinearEquation}
\end{equation}
with unknown $\xv \in R^{n\times 1}$, where $\Am \in R^{m\times
n}$ and $\bv\in R^{m\times 1}$. In \cite{Bulyovszky2017polynomial},
Bulyovszky and Horv\'ath transformed Equation (\ref{LinearEquation}) into a
linear system of $\gamma m$ equations with $\gamma n$ unknowns over the Galois
ring $R_{0}$. Then, they used the Hermite normal form to solve it with a complexity given as
follows:

\begin{proposition}
\label{EquationLocalRing}Assume that for all $r\in R$ we have computed $%
r_{i}\in R_{0}$ such that $r=$ $r_{1}z_{1}+\cdots +r_{\gamma }z_{\gamma }$.
Then Equation (\ref{LinearEquation}) can be solved $\bigO \left( mn\min
\left\{ m,n\right\} \gamma ^{3}\right) $ operations in $R_{0}$.
\end{proposition}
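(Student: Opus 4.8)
The plan is to follow the reduction of Bulyovszky and Horv\'ath: convert the system over $R$ into a larger linear system over the Galois ring $R_{0}$ and then solve the latter by Hermite normal form. First, I would exploit the $R_{0}$-module decomposition $R=R_{0}z_{1}\oplus\cdots\oplus R_{0}z_{\gamma}$ to identify $R$ with $R_{0}^{\gamma}$ as $R_{0}$-modules. Under this identification, multiplication by a fixed element $a\in R$ is an $R_{0}$-linear endomorphism of $R$, hence is represented by a matrix $\varphi(a)\in R_{0}^{\gamma\times\gamma}$ in the basis $z_{1},\ldots,z_{\gamma}$: writing $a z_{\ell}=\sum_{k}[\varphi(a)]_{k\ell}z_{k}$, the columns of $\varphi(a)$ are exactly the coordinate vectors of the products $a z_{\ell}$, which are read off from the precomputed coordinates assumed in the statement. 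This map $\varphi$ is the regular representation of $R$ over $R_{0}$, and it is a ring homomorphism respecting the $R_{0}$-module structure.

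Second, I would translate the matrix equation entry by entry. Replacing each scalar $a_{ij}$ of $\Am$ by the $\gamma\times\gamma$ block $\varphi(a_{ij})$ yields a matrix $\tilde{\Am}\in R_{0}^{\gamma m\times\gamma n}$; replacing $\xv$ and $\bv$ by their stacked coordinate vectors $\tilde{\xv}\in R_{0}^{\gamma n\times 1}$ and $\tilde{\bv}\in R_{0}^{\gamma m\times 1}$, the $R_{0}$-linearity of $\varphi$ and of the coordinate map gives that $\Am\xv=\bv$ holds over $R$ if and only if $\tilde{\Am}\,\tilde{\xv}=\tilde{\bv}$ holds over $R_{0}$. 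Hence the two systems are equivalent and any solution $\xv$ is recovered from $\tilde{\xv}$ by re-assembling coordinates. Forming $\tilde{\Am}$ costs $\bigO\left(mn\gamma^{2}\right)$ table look-ups, which will be dominated by the solving step.

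Third, I would solve $\tilde{\Am}\,\tilde{\xv}=\tilde{\bv}$ over $R_{0}$. Since $R_{0}$ is a Galois ring, it is a finite chain ring and in particular a principal ideal ring, so the Hermite normal form of $\tilde{\Am}$ is well defined and computable by a Gaussian-elimination-type procedure; from it one reads off solvability and a description of the solution set by back-substitution. The standard cost of reducing an $M\times N$ matrix to Hermite normal form and back-substituting is $\bigO\left(MN\min\left\{M,N\right\}\right)$ operations in the base ring. Substituting $M=\gamma m$ and $N=\gamma n$ gives $\bigO\left(\gamma m\cdot\gamma n\cdot\gamma\min\left\{m,n\right\}\right)=\bigO\left(mn\min\left\{m,n\right\}\gamma^{3}\right)$, as claimed.

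The step needing the most care is the complexity accounting for the Hermite normal form over $R_{0}$: one must check that the pivoting operations appropriate to a chain ring (the extended-gcd-style combinations used to clear a column) each cost a constant number of ring operations per affected entry, so that the whole elimination stays within $\bigO\left(MN\min\left\{M,N\right\}\right)$ and the factor $\gamma^{3}$ arises solely from the $\gamma$-fold inflation of the two dimensions. By contrast, the correctness of the block substitution is a routine verification that $\varphi$ and the coordinate isomorphism intertwine matrix multiplication with the block structure, preserving the solution set.
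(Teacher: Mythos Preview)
Your proposal is correct and follows exactly the approach the paper itself attributes to Bulyovszky and Horv\'ath: the paper does not give an independent proof of this proposition but simply cites their method of inflating the system to a $\gamma m\times\gamma n$ system over $R_{0}$ via the regular representation and solving via Hermite normal form, with the stated complexity. Your write-up supplies precisely the details behind that citation, and your caveat about the cost of HNF pivoting over a chain ring is the right place to be careful.
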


\begin{example}\label{ExEq}
$R=\mathbb{Z}_{4}\left[ X\right] /\left( X^{2}\right) $ is a local ring where the maximal
ideal is generated by $2+\left( X^{2}\right) $ and $X+\left( X^{2}\right) $.
Set $\xi =X+\left( X^{2}\right) $. Then a maximal Galois subring of $R$ is $%
R_{0}=\mathbb{Z}_{4}$ and we have $R=R_{0}\oplus R_{0}\xi $. Let's consider the system
\begin{equation}
\left(
\begin{array}{cc}
2 & {\xi +1} \\
{\xi } & {2\xi +1}
\end{array}%
\right) \left(
\begin{array}{c}
x_{1} \\
x_{2}%
\end{array}%
\right) =\left(
\begin{array}{c}
0 \\
{\xi +2}%
\end{array}%
\right)  \label{LinearEquation1}
\end{equation}%
Set $x_{1}=x_{1,1}+x_{1,2}\xi $ and $x_{2}=x_{2,1}+x_{2,2}\xi $, where $%
x_{i,j}\in R_{0}$, for $1\leq i \leq 2$ and $1\leq j \leq 2$. Then, Equation (%
\ref{LinearEquation1}) is equivalent to:
\begin{equation}
\left(
\begin{array}{cccc}
2 & 1 & 0 & 0 \\
0 & 1 & 0 & 0 \\
0 & 1 & 2 & 1 \\
1 & 2 & 0 & 1%
\end{array}%
\right) \left(
\begin{array}{c}
x_{1,1} \\
x_{2,1} \\
x_{1,2} \\
x_{2,2}%
\end{array}%
\right) =\left(
\begin{array}{c}
0 \\
2 \\
0 \\
1%
\end{array}
\right)  \label{LinearEquation2}
\end{equation}%
The solutions of Equation (\ref{LinearEquation2}) are $\left( 3,2,2,2\right)
$; $\left( 1,2,3,0\right) $; $\left( 3,2,0,2\right) $; $\left(
1,2,1,0\right) $.\newline
Thus, the solutions of Equation (\ref{LinearEquation1}) are $\left( 3+2\xi
,2+2\xi \right) $; $\left( 1+3\xi ,2\right)$; $\left( 3,2+2\xi \right) $; $%
\left( 1+\xi ,2\right) $.
\end{example}

If the column vectors of $\Am$ are linearly independent, then Equation (%
\ref{LinearEquation}) is easy to solve using the following:

\begin{proposition}
\label{FullRowRank} Let $\Am \in R^{m\times n}$. Assume that the
column vectors of $\Am$ are linearly independent. Then there is an
invertible matrix $\Pm \in R^{m\times m}$ such that
\begin{equation*}
\Pm \Am = \left(
\begin{array}{c}
\Im_{n} \\
\mathbf{0}%
\end{array}%
\right)
\end{equation*}%
Moreover, $\Pm$ can be calculated in $\bigO \left( mn\min \left\{
m,n\right\} \right) =\bigO \left( mn^{2}\right)$ operations over $R$.
\end{proposition}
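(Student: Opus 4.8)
The plan is to reduce everything modulo the maximal ideal $\mathfrak{m}$ and then run Gaussian elimination over $R$, exploiting that the only obstruction to finding a pivot is an entry lying in $\mathfrak{m}$. First I would record the structural fact where the local hypothesis does the work. Since $R$ is finite and local, $\mathfrak{m}$ is nilpotent; let $s\ge 1$ be minimal with $\mathfrak{m}^{s}=0$. I claim that if the columns $\avec_1,\dots,\avec_n$ of $\Am$ are linearly independent over $R$, then their images $\overline{\avec}_1,\dots,\overline{\avec}_n$ are linearly independent over $\F_q=R/\mathfrak{m}$. Indeed, suppose $\sum_i \alpha_i\avec_i$ has all entries in $\mathfrak{m}$ while some $\alpha_{i_0}\notin\mathfrak{m}$, i.e. $\alpha_{i_0}\in R^{\ast}$. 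Multiplying by any $t\in\mathfrak{m}^{s-1}$ sends $\sum_i\alpha_i\avec_i$ into $\mathfrak{m}^{s}R^{m}=0$, so $\sum_i (t\alpha_i)\avec_i=\mathbf{0}$; linear independence forces $t\alpha_i=0$ for all $i$, in particular $t\alpha_{i_0}=0$, whence $t=0$ because $\alpha_{i_0}$ is a unit. As $t\in\mathfrak{m}^{s-1}$ was arbitrary, $\mathfrak{m}^{s-1}=0$, contradicting minimality of $s$. Hence $\overline{\Am}$ has rank $n$ over the field $\F_q$; in particular $n\le m$ (so $\min\{m,n\}=n$) and every column of the reduction is nonzero, i.e. each such column has at least one entry in $R^{\ast}$.

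Next I would eliminate column by column. By the previous paragraph the first column has a unit entry; after a row permutation I may assume $(\Am)_{1,1}\in R^{\ast}$, and left-multiplying by the invertible matrix that scales row $1$ by $(\Am)_{1,1}^{-1}$ and then clears the rest of column $1$ produces $\left(\begin{smallmatrix}1 & \ast\\ \mathbf{0} & \Am'\end{smallmatrix}\right)$ with $\Am'\in R^{(m-1)\times(n-1)}$. The crucial point to verify is that $\Am'$ again has linearly independent columns: left multiplication by an invertible matrix preserves linear independence of columns, and a short computation (any dependency among the columns of $\Am'$ extends, by choosing the first coefficient appropriately, to a dependency among the columns of the full transformed matrix) shows independence descends to $\Am'$. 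Thus the hypothesis is restored and I can iterate, each stage supplying a unit pivot by applying the structural fact to the current trailing submatrix.

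After $n$ stages the accumulated invertible row operations give $\Pm_0\Am=\left(\begin{smallmatrix}\Um\\ \mathbf{0}\end{smallmatrix}\right)$ with $\Um\in R^{n\times n}$ unit upper triangular, hence invertible; setting $\Pm=\left(\begin{smallmatrix}\Um^{-1} & \mathbf{0}\\ \mathbf{0} & \Im_{m-n}\end{smallmatrix}\right)\Pm_0$ yields $\Pm\Am=\left(\begin{smallmatrix}\Im_n\\ \mathbf{0}\end{smallmatrix}\right)$ (equivalently, run Gauss--Jordan and clear the entries above each pivot as well). For the complexity, each of the $n$ pivot stages costs $\bigO(mn)$ operations in $R$ (clearing one column against $m$ rows of length $n$), so the total is $\bigO(mn^2)=\bigO\left(mn\min\{m,n\}\right)$ using $n\le m$. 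I expect the main obstacle to be precisely the structural lemma of the first paragraph: guaranteeing that a unit pivot survives at every stage is what genuinely uses locality (units are exactly the elements outside $\mathfrak{m}$, and $\mathfrak{m}$ is nilpotent), together with checking that linear independence of the columns is inherited by the trailing submatrix so that the lemma continues to apply.
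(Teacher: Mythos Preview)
Your proof is correct and follows essentially the same route as the paper's: Gaussian elimination over $R$, with the key observation that a unit pivot is always available because linear independence over $R$ descends to $R/\mathfrak{m}$. The paper's own proof is just a one-line citation to Fan et al.\ (2014) for the existence of $\Pm$ via Gaussian elimination and an appeal to the field case for the complexity; you have simply written out the details that the paper delegates to that reference, and your ``structural fact'' is exactly one direction of the paper's later Lemma~\ref{LinearIndependentCondition2}.
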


\begin{proof}
In \cite{Fan2014matrix}, Fan et al. proved the existence of $\Pm$ and
calculated it using the Gaussian elimination. Thus, as in the case of fields,
the computation can be done in $ \bigO \left( mn\min \left\{ m,n\right\} \right)
=\bigO \left( mn^{2}\right) $ operations over $R$.
\end{proof}

\begin{corollary}
\label{UniqueSolution}Assume that the column vectors of the matrix $\Am$ of Equation (\ref{LinearEquation}) are linearly independent. Then, Equation (%
\ref{LinearEquation}) can be solved in $\bigO \left( mn\min \left\{ m,n\right\}
\right) =\bigO \left( mn^{2}\right) $ operations over $R$.
\end{corollary}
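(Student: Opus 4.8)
The plan is to read the solution straight off the normal form provided by Proposition~\ref{FullRowRank}. Note first that the linear independence of the $n$ columns of $\Am$ forces $n\le m$ (a free submodule of $R^{m}$ has rank at most $m$; indeed the shape $\left(\begin{smallmatrix}\Im_{n}\\\mathbf 0\end{smallmatrix}\right)$ in Proposition~\ref{FullRowRank} already presupposes $m\ge n$). Hence $\min\{m,n\}=n$ and the claimed bound $\bigO(mn\min\{m,n\})$ is exactly $\bigO(mn^{2})$, which is the estimate I will target.

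First I would apply Proposition~\ref{FullRowRank} to obtain an invertible matrix $\Pm\in R^{m\times m}$ with $\Pm\Am=\left(\begin{smallmatrix}\Im_{n}\\\mathbf 0\end{smallmatrix}\right)$. Because $\Pm$ is invertible, $\xv$ solves $\Am\xv=\bv$ if and only if it solves $\Pm\Am\xv=\Pm\bv$. Writing $\Pm\bv=\left(\begin{smallmatrix}\cv_{1}\\\cv_{2}\end{smallmatrix}\right)$ with $\cv_{1}\in R^{n\times 1}$ and $\cv_{2}\in R^{(m-n)\times 1}$, the transformed system splits into the equality $\xv=\cv_{1}$ together with the constraint $\mathbf 0=\cv_{2}$. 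Therefore Equation~(\ref{LinearEquation}) has the unique solution $\xv=\cv_{1}$ when $\cv_{2}=\mathbf 0$, and no solution otherwise; in either case the answer is obtained by inspection once $\Pm\bv$ is known.

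For the complexity, the one point that needs care is that forming $\Pm$ explicitly and then computing the matrix--vector product $\Pm\bv$ would incur an extra $\bigO(m^{2})$, which is not in general absorbed by $\bigO(mn^{2})$. I would therefore avoid materialising $\Pm$: instead, I apply the very row operations that Proposition~\ref{FullRowRank} uses to reduce $\Am$ directly to the augmented matrix $[\Am\mid\bv]\in R^{m\times(n+1)}$. This is just Gaussian elimination on an $m\times(n+1)$ matrix and, exactly as in Proposition~\ref{FullRowRank}, runs in $\bigO\!\left(m(n+1)\min\{m,n+1\}\right)=\bigO(mn^{2})$ operations over $R$; afterwards $\cv_{1}$ and $\cv_{2}$ are read off and the consistency check $\cv_{2}=\mathbf 0$ costs $\bigO(m)$. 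Combining these yields the total $\bigO(mn^{2})=\bigO(mn\min\{m,n\})$. There is essentially no hard step here—the statement is an immediate consequence of the normal form—and the only genuine obstacle is this bookkeeping: keeping the treatment of the right-hand side inside the elimination so that the cost stays at $\bigO(mn^{2})$ rather than $\bigO(mn^{2}+m^{2})$.
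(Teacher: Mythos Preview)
Your proposal is correct and follows the same approach the paper intends: the corollary is stated without proof as an immediate consequence of Proposition~\ref{FullRowRank}, and you have simply written out the natural details (row-reduce the augmented matrix $[\Am\mid\bv]$ and read off the solution). Your extra care in avoiding the explicit product $\Pm\bv$ to keep the cost at $\bigO(mn^{2})$ rather than $\bigO(mn^{2}+m^{2})$ is a nice touch that the paper glosses over.
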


\subsection{Rank and Free-Rank of Modules and Vectors}
In this subsection, we give some characterizations of rank and free-rank over finite local rings. As in \cite{Dougherty2009mds}, we recall here the notions of Rank and Free-rank for modules.

\begin{definition} Let $M$ be a finitely generated $R-$module.

(i) The \textbf{rank }of $M$, denoted by $\rk_{R}\left( M \right) $, or simply by $%
\rk \left( M\right) $, is the smallest number of elements in $M$ that
generate $M$ as a $R-$module.

(ii) The \textbf{free-rank} of $M$, denoted by $\frk_{R}\left( M\right) $ or simply by
$\frk \left( M\right) $, is the maximum of the ranks of free $R-$submodules of
$M$.
\end{definition}

\begin{proposition}
\label{RankCondition} \cite{Lam1999lectures} Let $F$ be a finitely generated
free $R-$module and $\left\{ e_{1},\ldots ,e_{n}\right\} $ be a basis of $F$.
Then $\frk_{R}\left( F\right) =\rk_{R}\left( F\right) =n$ and any generating
set of $F$ consisting of $n$ elements is a basis of $F$.
\end{proposition}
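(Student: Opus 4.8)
The plan is to push everything through the finiteness of $R$, so that each of the three assertions reduces to a counting or pigeonhole argument. Assume $R \neq 0$, so that $\abs{R} \geq 2$. Since $\{e_{1},\ldots ,e_{n}\}$ is a basis, every element of $F$ is written uniquely as $\sum_{i=1}^{n}\alpha_{i}e_{i}$ with $\alpha_{i}\in R$; hence $F\cong R^{n}$ as $R$-modules and $\abs{F}=\abs{R}^{n}$. This single identity drives all three parts.

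First I would settle $rk_{R}(F)=n$. The given basis generates $F$, so $rk_{R}(F)\leq n$. For the reverse inequality, suppose $\{g_{1},\ldots ,g_{t}\}$ generates $F$. Then the $R$-linear map $R^{t}\to F$ sending the standard generators to the $g_{i}$ is surjective, whence $\abs{F}\leq \abs{R}^{t}$. Combining with $\abs{F}=\abs{R}^{n}$ and $\abs{R}\geq 2$ gives $n\leq t$, and taking the minimum over generating sets yields $rk_{R}(F)\geq n$, hence equality. The same surjection argument applied to a second basis shows that any two bases of $F$ have equal cardinality, i.e.\ the rank of a free module is well defined (invariant basis number); this is needed implicitly below.

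Next, for $frk_{R}(F)=n$: the module $F$ is a free submodule of itself of rank $n$, so $frk_{R}(F)\geq n$. Conversely, if $G\subseteq F$ is a free submodule with a basis of $s$ elements, then $\abs{G}=\abs{R}^{s}$, and the inclusion $G\subseteq F$ forces $\abs{R}^{s}\leq \abs{R}^{n}$, hence $s\leq n$. Therefore $frk_{R}(F)=n$. Finally, to see that a generating set $\{g_{1},\ldots ,g_{n}\}$ of cardinality $n$ is a basis, consider the surjective $R$-linear map $\phi\colon R^{n}\to F$ sending the $i$-th standard generator to $g_{i}$. Its source and target are finite of the same cardinality $\abs{R}^{n}$, so a surjection between them is necessarily a bijection; in particular $\phi$ is injective. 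Injectivity of $\phi$ is precisely the statement that $\sum_{i}\alpha_{i}g_{i}=\zz$ implies all $\alpha_{i}=0$, i.e.\ that $\{g_{1},\ldots ,g_{n}\}$ is linearly independent. Being a linearly independent generating set, it is a basis.

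The only genuinely delicate point is this last step: the passage from \emph{surjective} to \emph{bijective} rests essentially on the finiteness of $R$ (equivalently, on the pigeonhole principle for the equal finite cardinalities $\abs{R^{n}}=\abs{F}$). Over an infinite commutative ring the analogous statement still holds but requires Nakayama-type or Cayley--Hamilton determinant arguments; here finiteness makes it immediate, which is the main simplification I would emphasize. An alternative route for the lower bound on $rk_{R}(F)$ would reduce modulo a maximal ideal $\mathfrak{m}$, using that $F/\mathfrak{m}F\cong (R/\mathfrak{m})^{n}$ is an $n$-dimensional vector space over the field $R/\mathfrak{m}$ so that every generating set maps to a spanning set of size $\geq n$; but for finite $R$ the direct counting argument above is cleaner and self-contained.
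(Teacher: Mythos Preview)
Your proof is correct. Note, however, that the paper does not give its own proof of this proposition: it simply cites \cite{Lam1999lectures} and moves on. So there is no in-paper argument to compare against.

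What can be said is that your approach differs from the standard treatment one finds in Lam. You exploit the finiteness of $R$ throughout, reducing every claim to a cardinality count; this is perfectly legitimate in the paper's setting, where $R$ is always a finite commutative ring, and it makes the argument entirely elementary. The textbook route (valid over any commutative ring, finite or not) instead uses that commutative rings satisfy the invariant basis number property and that a surjective endomorphism of a finitely generated module over a commutative ring is an isomorphism, the latter typically proved via a Cayley--Hamilton/determinant-trick or Nakayama argument. You correctly flag this distinction in your final paragraph. Your counting proof buys simplicity and self-containment at the cost of generality; the cited reference buys generality at the cost of heavier machinery. For the purposes of this paper, your version is entirely adequate.
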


\begin{corollary}
\label{FreeModuleCondition} Let $M$ be a finitely generated $R-$module.
Then, $M$ is a free module if and only if $\frk_{R}\left( M\right)
=\rk_{R}\left( M\right) $.
\end{corollary}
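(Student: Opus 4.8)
The plan is to treat the two implications separately, with the forward direction being essentially a restatement of Proposition \ref{RankCondition} and all of the real work concentrated in the converse, where I intend to exploit the finiteness of $R$ through a cardinality squeeze.

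For the direction ``$M$ free $\Rightarrow frk_{R}\left( M\right) =rk_{R}\left( M\right)$'', I would simply apply Proposition \ref{RankCondition} to $F=M$: a finitely generated free module admits a basis, say of size $n$, and the proposition then yields $frk_{R}\left( M\right) =rk_{R}\left( M\right) =n$ immediately, with nothing further to check.

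For the converse, set $n:=rk_{R}\left( M\right) =frk_{R}\left( M\right)$. First I would record an upper bound on $\left\vert M\right\vert$: since $rk_{R}\left( M\right) =n$, the module $M$ is generated by $n$ elements, so there is a surjective $R$-linear map $\pi :R^{n}\rightarrow M$, whence $\left\vert M\right\vert \leq \left\vert R^{n}\right\vert =\left\vert R\right\vert ^{n}$. Next a matching lower bound: since $frk_{R}\left( M\right) =n$, there is a free submodule $N\subseteq M$ with $rk_{R}\left( N\right) =n$; being free of rank $n$ (with a genuine basis of $n$ elements, as guaranteed by Proposition \ref{RankCondition} applied to $N$), it satisfies $N\cong R^{n}$ and hence $\left\vert N\right\vert =\left\vert R\right\vert ^{n}$. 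Combining the two estimates with the inclusion $N\subseteq M$ gives $\left\vert R\right\vert ^{n}=\left\vert N\right\vert \leq \left\vert M\right\vert \leq \left\vert R\right\vert ^{n}$, so every inequality is forced to be an equality.

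The conclusion then follows purely from finiteness: $N\subseteq M$ together with $\left\vert N\right\vert =\left\vert M\right\vert <\infty$ forces $N=M$, and since $N$ is free, so is $M$. The main point to get right --- and the only place where the finiteness of $R$ is genuinely used --- is precisely this squeeze, which turns the two numerical coincidences ($rk=n$ and $frk=n$) into an honest equality of submodules. Without a finite cardinality the inclusion of two rank-$n$ objects need not be an equality, and one would instead have to argue directly that $\Ker \pi =0$; here the counting argument bypasses that difficulty entirely, so I expect no serious obstacle beyond stating the cardinality bookkeeping cleanly.
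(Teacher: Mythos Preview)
Your proof is correct. The paper states this result as a corollary of Proposition \ref{RankCondition} without providing an explicit argument, and your cardinality squeeze is exactly the natural way to extract the converse implication from that proposition together with the standing assumption that $R$ is finite; the forward direction is, as you say, immediate.
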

Let $M$ be a $R-$module. Then $M/\mathfrak{m}M$ is a vector space over $R/
\mathfrak{m}$ with the scalar multiplication given by $\left( a+\mathfrak{m},x+\mathfrak{m}M\right) \longmapsto ax+\mathfrak{m}M$. By \cite[Theorem V.5]{Mcdonald1974finite}, we have the following:

\begin{proposition}
\label{RankOverLocalRing} If $M$ is a $R-$module generated by $G=\left\{ g_{i}\right\} _{1\leq i\leq k}$, then the following statements hold:
\begin{itemize}
    \item[(i)] $\rk_{R}(M) =\rk_{R/\mathfrak{m}}\left(M/\mathfrak{m}M\right)$;
    \item[(ii)]  There are elements $g_{i_{1}},\ldots ,g_{i_{r}}$ in $G$ such that $\left\{
g_{i_{1}},\ldots ,g_{i_{r}}\right\} $ generates $M$, where $r=\rk_{R}(M)$. 
\end{itemize}
\end{proposition}

Using Proposition \ref{RankOverLocalRing}, the rank decomposition given in \cite[Corollary 3.5]{Kamche2019rank} can be extended to finite local rings.

\begin{corollary}
\label{RankDecompositions}(Rank Decompositions). Let $\Em \in
R^{m\times n}$, $\rk \left( \col(\mathbf{E)}\right) =c$ and $\rk \left( \row(%
\mathbf{E)}\right) =r$. Then,

(i) there exist $\Am  \in R^{m\times c}$ with $\col(\Am  ) = \col(\mathbf{E)}
$, and $\mathbf{B} \in R^{c\times n}$ with $\frk(\row(\mathbf{B)})=c$, such that $%
\Em =\mathbf{AB}$;

(ii) there exist $\Am ^{\prime } \in R^{m\times r}$ with $\frk \left(
\col(\Am ^{\prime }\right) )=r$, and $\mathbf{B}^{\prime } \in %
R^{r\times n}$ with $\row(\Em )=\row(\mathbf{B}^{\prime })$, such that $%
\Em =\Am ^{\prime }\mathbf{B}^{\prime }$.
\end{corollary}

\begin{proof}
(i) From Proposition \ref{RankOverLocalRing}, there is a permutation matrix $\mathbf{P}$ such that
the first $c$ columns of the matrix $\mathbf{EP}$ generate $\col(\mathbf{E)}$%
. Set $\mathbf{EP=}\left(
\begin{array}{cc}
\Am  & \mathbf{C}%
\end{array}%
\right) $ where  $\Am $ and $\mathbf{C}$ are submatrices of  $\mathbf{%
PE}$ of sizes $m\times c$ and $m\times (n-c)$. Since $\col(\mathbf{A)=}\col(%
\mathbf{E)}$, there is $\mathbf{D}\in R^{c\times (n-c)}$ such that \ $%
\mathbf{C=AD}$. Therefore, $\Em =\mathbf{AB}$ where $\mathbf{B}=\left(
\begin{array}{cc}
\mathbf{I}_{c} & \mathbf{D}%
\end{array}\right) \mathbf{P}^{-1}$ and $\frk(\row(\mathbf{B}))=c$.

(ii) One can prove this by applying (i) to $\Em ^{\top }$.
\end{proof}
In the previous proposition, it is important to notice that $r$ is not necessary equal to $c$ if $R$ is not a principal ideal ring. As an instance, consider the ring $R=\mathbb{Z}_{4}\left[ X\right] /\left( X^{2}\right) $ given in Example \ref{ExEq} and set $\Em =(\begin{array}{cc}
2 ~& ~\xi
\end{array}
)$. Then, $\rk \left( \row(\mathbf{E)}\right) =1$ while $\rk \left( \col(\mathbf{E)}%
\right) =2$.
Another point worth emphasizing when $R$ is not a finite principal ideal ring is that,
the rank of a submodule $N$ of a finitely generated $R-$module $M$ can satisfy $\rk_{R}\left(
M\right) <\rk_{R}\left( N\right) $. As an example, $T=\mathbb{Z}
_{4}\left[ X\right] /\left( X^{2}\right) $ is a local ring with maximal
ideal $Q$ generated by $2+\left( X^{2}\right) $ and $X+\left(
X^{2}\right) $. We then have $\rk_{T}\left( T\right) =1$ while $\rk_{T}\left(
Q\right) =2$. The following proposition provides an upper bound for the rank of a 
submodule.

\begin{proposition}
\label{BoundRankOverLocalRing}Given a submodule $N$ of a finitely generated $R-$module $M$, we have
\[
\rk_{R}\left( N\right) \leq \gamma \rk_{R}\left( M\right) .
\]
\end{proposition}

\begin{proof}
As $R$ and $R_{0}$ share the same residue field $\F_{q}=R/%
\mathfrak{m}$, the maximal ideal of $R_{0}$ is $pR_{0}$ (remember that $\gamma$, $p$ and $R_{0}$ are defined in the Section \ref{sec:prelims}). Furthermore, the map $N/pN\longrightarrow $ $N/%
\mathfrak{m}N$ given by $x+pN\longmapsto x+\mathfrak{m}N$ is a surjective $%
\F_{q}-$linear map. So, $\rk_{\F_{q}}\left( N/\mathfrak{m}%
N\right) \leq \rk_{\F_{q}}\left( N/pN\right) $. Therefore, by
Proposition \ref{RankOverLocalRing},\ \ $\rk_{R}\left( N\right) \leq
\rk_{R_{0}}\left( N\right) $. Since $N\subset M$ \ and $R_{0}$ is a principal
ideal ring, by \cite[Proposition 3.2]{Kamche2019rank}, $\rk_{R_{0}}\left(
N\right) \leq \rk_{R_{0}}\left( M\right) $. As $R$ is a $R_{0}-$module of
rank $\gamma $, $\rk_{R_{0}}\left( M\right) \leq \gamma \rk_{R}\left(
M\right) $ and we have the result.
\end{proof}

\begin{corollary}
\label{IntersectionComplexity1}Let $M$ and $N$ be two submodules of $R^{n}$. Then, the
intersection of $M$ and $N$ can be computed in $\mathcal{O}\left( \gamma^{4}n^3\right)$ operations over R.
\end{corollary}

\begin{proof}
Set $r = \rk(M)$ and $s = \rk(N)$ and let $\widetilde{\mathbf{M}}$ and $\widetilde{\mathbf{N}}$ be matrices with rows that generate $M$ and $N$ respectively. For $\mathbf{y} \in M\cap N$, there exist $\mathbf{x}\in R^{s}$ and $\mathbf{x}^{\prime }\in R^{r}$
such that $\mathbf{y=x}\widetilde{\mathbf{M}}=$ $\mathbf{x}^{\prime }\widetilde{\mathbf{N}}$. Thus, to find the intersection of $M$ and $N$, it suffices to solve the linear equation
\begin{equation*}
\mathbf{x}\widetilde{\mathbf{M}}=\mathbf{x}^{\prime }\widetilde{\mathbf{N}}
\end{equation*}
with unknowns $\mathbf{x}$ and $\mathbf{x}^{\prime }$. Thanks to Proposition \ref{EquationLocalRing}, this equation can be solved in $\mathcal{O}\left(\gamma^{3}n(r+s)\min\{r+s,n\} \right)$ and, according to Proposition \ref{BoundRankOverLocalRing}, we have $r\leq \gamma n$ and $s\leq \gamma n$. So we have the result.
\end{proof}
As the vector space $R^{n}/\mathfrak{m}R^{n}$ is isomorphic to $\left( R/
\mathfrak{m}\right) ^{n}$, according to \cite[Theorem V.5]{Mcdonald1974finite}, we have the following:

\begin{proposition}
\label{FreeRankOverLocalRing}Let $N$ be a submodule of $R^{n}$. Then,
\begin{equation*}
\frk_{R}\left(N\right) =\rk_{R/\mathfrak{m}}\left(\Psi \left(N\right)
\right) .
\end{equation*}
\end{proposition}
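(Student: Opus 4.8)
The plan is to translate both quantities into the language of maximal linearly independent families and then transport independence across the projection $\Psi$ by invoking Lemma \ref{LinearIndependentCondition2}. First I would record a combinatorial description of the left-hand side: by definition $frk_R(N)$ is the largest rank of a free submodule of $N$, and a free submodule of rank $s$ is precisely one generated by a set of $s$ elements of $N$ that are linearly independent over $R$ (such a set is a basis, and its cardinality equals the rank by Proposition \ref{RankCondition}). Hence $frk_R(N)$ equals the maximal size of a subset of $N$ that is linearly independent over $R$. On the right-hand side, since $R/\mathfrak{m}=\F_q$ is a field, $rk_{R/\mathfrak{m}}\left( \Psi(N)\right)$ is just $\dim_{\F_q}\Psi(N)$, i.e. the maximal size of an $\F_q$-linearly independent subset of $\Psi(N)$.

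The bridge between the two descriptions is the isomorphism $R^{n}/\mathfrak{m}R^{n}\cong \F_q^{n}$ noted just before the statement, under which the class $x+\mathfrak{m}R^{n}$ corresponds to $\Psi(x)$. Applying Lemma \ref{LinearIndependentCondition2} with $F=R^{n}$, a family $x_{1},\ldots ,x_{s}\in N$ is linearly independent over $R$ if and only if $\Psi(x_{1}),\ldots ,\Psi(x_{s})$ is linearly independent over $\F_q$. I would then prove the two inequalities. For one direction, I take a linearly independent family $x_{1},\ldots ,x_{s}\in N$ with $s=frk_R(N)$; its images $\Psi(x_{1}),\ldots ,\Psi(x_{s})$ lie in $\Psi(N)$ and are independent, so $\dim_{\F_q}\Psi(N)\ge frk_R(N)$. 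For the reverse direction, I choose a basis $v_{1},\ldots ,v_{t}$ of $\Psi(N)$ with $t=\dim_{\F_q}\Psi(N)$, lift each $v_{i}$ to some $y_{i}\in N$ with $\Psi(y_{i})=v_{i}$ (possible by the very definition of $\Psi(N)$), and conclude from the lemma that $y_{1},\ldots ,y_{t}$ are linearly independent over $R$; then $\left\langle y_{1},\ldots ,y_{t}\right\rangle_{R}$ is a free submodule of $N$ of rank $t$, whence $frk_R(N)\ge t$. Combining the inequalities yields the equality.

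I do not expect a genuine obstacle here: essentially all of the work has been front-loaded into Lemma \ref{LinearIndependentCondition2}, so the argument is a bookkeeping exercise once both sides are phrased via independent families. The only steps needing care are the combinatorial reformulation of $frk_R(N)$ — in particular checking, via Proposition \ref{RankCondition}, that $s$ independent elements generate a free submodule of rank exactly $s$ and not more — and verifying that the identification of $x+\mathfrak{m}R^{n}$ with $\Psi(x)$ is compatible with the hypotheses of Lemma \ref{LinearIndependentCondition2}, so that independence over $\F_q$ in $\F_q^{n}$ and independence of the classes in $R^{n}/\mathfrak{m}R^{n}$ really are the same condition.
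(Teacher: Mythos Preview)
Your proposal is correct and follows essentially the same approach as the paper: both directions are obtained by transporting linear independence through $\Psi$ via Lemma \ref{LinearIndependentCondition2}, lifting a basis of $\Psi(N)$ for one inequality and projecting a maximal independent family (equivalently, a basis of a free submodule of maximal rank) for the other. The only cosmetic difference is that the paper briefly isolates the degenerate case $frk_R(N)=0\Leftrightarrow \Psi(N)=\{0\}$ before treating the general case, whereas your formulation absorbs this automatically.
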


A direct consequence of Proposition \ref{FreeRankOverLocalRing} is the following:

\begin{corollary}
For any matrix $\Em \in R^{m\times n}$, we have $\frk \left( \row(\mathbf{E)}\right)
=\frk \left( \col(\mathbf{E)}\right) $.
\end{corollary}

By Proposition \ref{FreeRankOverLocalRing}, the computation of the free-rank of a submodule $N$ of $R^{n}$ (where $R$ is a finite local ring), is reduced to the same problem for $\Psi \left(N\right)$ as a vector space on the residual field. However, this proposition does not show whether $N$ is a free module or not. In the following, we will show how one can answer this question. Let us start with the following Lemma.

\begin{lemma}
\label{FreeModuleTest1} For any matrix $\Am \in R^{s\times n}$, there exist an
invertible matrix $\Pm \in R^{s \times s}$ and a $n\times n$
permutation matrix $\Qm$ such that $\Am=\Pm \Tm \Qm$, with
\begin{equation*}
\Tm=\left(
\begin{array}{cc}
\Tm_{1} & \Tm_{2} \\
\mathbf{0} & \Tm_{3}%
\end{array}%
\right)
\end{equation*}%
where $\Tm_{1}$ is an $r\times r$ upper uni-triangular matrix, $%
\Tm_{2}$ is in $R^{r\times (n-r)}$ and $\Tm_{3}$ is a $\left(
s-r\right) \times (n-r)$ matrix with entries in $\mathfrak{m}$.
\end{lemma}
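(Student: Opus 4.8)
The plan is to prove this by a Gaussian elimination adapted to the local ring $R$, carried out inductively on the number of rows $s$. The one fact that drives everything is that $R$ is local: an element $a \in R$ fails to be a unit if and only if $a \in \mathfrak{m}$, that is, if and only if $\Psi(a) = 0$. Hence an entry of $\Am$ is invertible precisely when its image under $\Psi$ is nonzero, which lets me choose pivots exactly as one would over the field $\F_q$. I also keep track of the constraint in the statement: row operations may be arbitrary invertible left factors (collected into $\Pm$), but column operations are limited to permutations (collected into $\Qm$).

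First I would dispose of the base case. If $\Psi(\Am) = \mathbf{0}$, i.e. every entry of $\Am$ lies in $\mathfrak{m}$, then I set $r = 0$, take $\Pm = \Im_{s}$ and $\Qm = \Im_{n}$, and let $\Tm = \Tm_{3} = \Am$; the conditions on $\Tm_{1}, \Tm_{2}$ hold vacuously and the condition on $\Tm_{3}$ holds by hypothesis. Otherwise some entry $a_{i_{0}j_{0}} \notin \mathfrak{m}$ is a unit. I bring it to position $(1,1)$ by a row permutation together with a column permutation (the column swap recorded in $\Qm$, the row swap in $\Pm$), then left-multiply by the diagonal matrix scaling the first row by $a_{i_{0}j_{0}}^{-1}$ so the pivot becomes $1$, and finally left-multiply by the elementary operations subtracting suitable multiples of row $1$ from the rows below to clear the remainder of the first column. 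Every operation here is a row operation except the single column swap, so the matrix now has the shape
\[
\begin{pmatrix} 1 & \mathbf{v} \\ \mathbf{0} & \Am' \end{pmatrix},
\]
with $\mathbf{v} \in R^{1\times(n-1)}$ and $\Am' \in R^{(s-1)\times(n-1)}$.

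Next I would apply the induction hypothesis to $\Am'$, obtaining an invertible $\Pm' \in R^{(s-1)\times(s-1)}$ and a permutation matrix $\Qm'$ of size $n-1$ reducing $\Am'$ to a block $\begin{pmatrix}\Tm_{1}' & \Tm_{2}' \\ \mathbf{0} & \Tm_{3}'\end{pmatrix}$ of the asserted form, with $\Tm_{1}'$ upper uni-triangular of size $(r-1)\times(r-1)$ and $\Tm_{3}'$ having entries in $\mathfrak{m}$. I embed $\Pm'$ and $\Qm'$ into sizes $s$ and $n$ by fixing the first index; the embedded row operations fix row $1$ and the embedded column permutation fixes column $1$, so the leading $1$ and the zero first column are undisturbed. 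Splitting $\mathbf{v} = (\mathbf{v}_{1} \mid \mathbf{v}_{2})$ with $\mathbf{v}_{1}$ of length $r-1$, I assemble
\[
\Tm_{1} = \begin{pmatrix} 1 & \mathbf{v}_{1} \\ \mathbf{0} & \Tm_{1}' \end{pmatrix}, \qquad \Tm_{2} = \begin{pmatrix} \mathbf{v}_{2} \\ \Tm_{2}' \end{pmatrix}, \qquad \Tm_{3} = \Tm_{3}'.
\]
Then $\Tm_{1}$ is $r\times r$ upper uni-triangular and $\Tm_{3}$ keeps its entries in $\mathfrak{m}$. Composing the accumulated column permutations into $\Qm_{0}$ and the accumulated invertible row factors into $\Sm$ gives $\Sm\Am\Qm_{0} = \Tm$; taking $\Pm = \Sm^{-1}$ and $\Qm = \Qm_{0}^{-1}$ (still a permutation matrix) yields $\Am = \Pm\Tm\Qm$.

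The argument is essentially routine, and the points that need care are the bookkeeping ones. I must ensure the elimination uses only genuine column permutations and never arbitrary column combinations, so that $\Qm$ really is a permutation; this holds because clearing the first column is done entirely by row operations, and columns are swapped only to bring a unit pivot into place. I should also verify that the embedded recursive permutation $\Qm'$ permutes the entries of $\mathbf{v}$ among themselves without touching the leading $1$ or the zeros beneath it, which is guaranteed by fixing the first coordinate. Finally, it is worth remarking that the $r$ produced is exactly $rk_{\F_{q}}(\Psi(\Am))$, since reducing each elimination step modulo $\mathfrak{m}$ gives invertible field operations and peels off one pivot, lowering the rank of the reduction by exactly one; by Proposition \ref{FreeRankOverLocalRing} this also equals $frk_{R}(\mathrm{row}(\Am))$.
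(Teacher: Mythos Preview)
Your proof is correct and follows essentially the same approach as the paper: Gaussian elimination over the local ring $R$, using the dichotomy ``unit or element of $\mathfrak{m}$'' to select pivots. The paper's proof is only a two-sentence sketch of exactly this idea, whereas you spell out the induction and the bookkeeping (in particular that column operations are restricted to permutations so that $\Qm$ is genuinely a permutation matrix); your closing remark identifying $r$ with $rk_{\F_q}(\Psi(\Am)) = frk_R(\mathrm{row}(\Am))$ anticipates the paper's Proposition~\ref{FreeModuleTest2}.
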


\begin{proof}
Since $R$ is a local ring with maximal ideal $\mathfrak{m}$, every entry of $\Am$ is either a unit or in $\mathfrak{m}$. 
Our pivots are chosen among the invertible entries of $\Am$. So for the first row, let select (if there exists) a first unit on that row (note that, in case there is no unit in the first row, we move to the next rows until we find a row with a unit, and then switch the position of that row with the first one). If the selected unit is not at the first position, we permute its column with the first column. We then continue by multiplying the first row by the inverse of that unit, and apply suitable linear combinations of that row with the others to obtain a new matrix with zeros under the pivot. The previous process is applied again to the submatrix obtained from the previous matrix by removing the first row and the first column. We continue the process until there is no unit in the remaining rows. At the end, we obtain a matrix of the form
\[
\left(
\begin{array}{cc}
\Tm_{1} & \Tm_{2} \\
\mathbf{0} & \Tm_{3}%
\end{array}%
\right)
\]
where the entries of $\Tm_{3}$ are in $\mathfrak{m}$ since they are the remaining non-invertible elements, and the result follows.
\end{proof}

\begin{proposition}
\label{FreeModuleTest2} Let $\Am$
be a $s\times n$ matrix whose rows generate a submodule $N$ of $R^{n}$. Assume that $\Am$
is decomposed as in Lemma \ref{FreeModuleTest1}. Then, $\frk_{R}\left(
N\right) =r$ and $N$ is a free module if and only if $\Tm_{3}=\mathbf{0}$%
\end{proposition}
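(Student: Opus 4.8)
The plan is to reduce everything to the matrix $\Tm$, then to a block-diagonal normal form, after which both assertions fall out of the earlier characterizations of rank and free-rank. First I would observe that the left factor $\Pm$ is invertible and the right factor $\Qm$ is a permutation matrix. Since left-multiplication by the invertible $\Pm$ does not change the row space, $N = row\left(\Am\right) = row\left(\Tm\Qm\right)$, and the map $\mathbf{v}\mapsto \mathbf{v}\Qm$ is an $R$-module automorphism of $R^{n}$ carrying $row\left(\Tm\right)$ onto $N$. Hence $N\cong row\left(\Tm\right)$, while the integer $r$ and the condition $\Tm_{3}=\mathbf{0}$ are intrinsic to $\Tm$; so it suffices to argue for $N = row\left(\Tm\right)$.

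For the equality $frk_{R}\left(N\right)=r$ I would invoke Proposition \ref{FreeRankOverLocalRing}, which gives $frk_{R}\left(N\right)=rk_{R/\mathfrak{m}}\left(\Psi\left(N\right)\right)$. As $\Psi$ is a ring homomorphism applied coefficientwise, $\Psi\left(N\right)$ is the $\F_{q}$-row space of $\Psi\left(\Tm\right)$. The block $\Tm_{1}$ is upper uni-triangular, so $\Psi\left(\Tm_{1}\right)$ is invertible over $\F_{q}$, whereas every entry of $\Tm_{3}$ lies in $\mathfrak{m}$, so $\Psi\left(\Tm_{3}\right)=\mathbf{0}$. Thus $\Psi\left(\Tm\right)$ has an invertible $r\times r$ top-left block and vanishing bottom rows, so its rank over the field $\F_{q}$ is exactly $r$, giving $frk_{R}\left(N\right)=r$.

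For the freeness criterion I would use Corollary \ref{FreeModuleCondition}: $N$ is free if and only if $rk_{R}\left(N\right)=frk_{R}\left(N\right)=r$. To compute $rk_{R}\left(N\right)$ I first put $\Tm$ into block-diagonal form. Right-multiplying $\Tm$ by the unipotent matrix $\begin{pmatrix}\Im_{r} & -\Tm_{1}^{-1}\Tm_{2}\\ \mathbf{0} & \Im_{n-r}\end{pmatrix}$ (invertible because $\Tm_{1}$ is) clears $\Tm_{2}$ and produces $\begin{pmatrix}\Tm_{1} & \mathbf{0}\\ \mathbf{0} & \Tm_{3}\end{pmatrix}$; as right-multiplication by an invertible matrix is an automorphism of $R^{n}$, this exhibits $N$ as the internal direct sum $R^{r}\oplus row\left(\Tm_{3}\right)$, the two summands lying in complementary coordinate blocks (the rows of the invertible $\Tm_{1}$ spanning all of $R^{r}$). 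Using Proposition \ref{RankOverLocalRing} and the fact that reduction modulo $\mathfrak{m}$ distributes over this direct sum, rank is additive, so $rk_{R}\left(N\right)=r+rk_{R}\left(row\left(\Tm_{3}\right)\right)$. Since a finitely generated module has rank $0$ exactly when it is the zero module, $rk_{R}\left(N\right)=r$ if and only if $row\left(\Tm_{3}\right)=\{\mathbf{0}\}$, that is, if and only if $\Tm_{3}=\mathbf{0}$.

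The two reductions are routine. The step I expect to require the most care is the block-diagonalization: I must confirm that clearing $\Tm_{2}$ by right-multiplication genuinely yields a module isomorphic to $N$ (so that $rk_{R}$, $frk_{R}$ and freeness are all preserved), and that both rank and free-rank split additively over the resulting direct sum $R^{r}\oplus row\left(\Tm_{3}\right)$. This last point rests on the observation that both reduction modulo $\mathfrak{m}$ and the projection $\Psi$ distribute over a direct sum whose summands occupy disjoint coordinate blocks, which is what turns the exact values of $rk_{R}\left(N\right)$ and $frk_{R}\left(N\right)$ into the clean equivalence with $\Tm_{3}=\mathbf{0}$.
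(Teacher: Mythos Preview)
Your argument is correct. The computation $frk_{R}(N)=r$ via Proposition~\ref{FreeRankOverLocalRing} is exactly what the paper does (the paper's proof is essentially the single sentence ``the rows of $\Tm\Qm$ generate $N$; by Proposition~\ref{FreeRankOverLocalRing}, the result follows''). For the freeness criterion, however, you take a genuinely different route: you block-diagonalize to identify $N\cong R^{r}\oplus row(\Tm_{3})$, use additivity of $rk_{R}$ over this direct sum, and then invoke Corollary~\ref{FreeModuleCondition}. This works, but it is more machinery than needed. The more direct argument (presumably what the paper has in mind) is: if $\Tm_{3}=\mathbf{0}$ then $N$ is visibly free on the first $r$ rows of $\Tm\Qm$; conversely, if $N$ is free then $rk_{R}(N)=frk_{R}(N)=r$, and the first $r$ rows of $\Tm\Qm$, being $R$-linearly independent in a free module of the same rank, already form a basis of $N$ (Lemmas~\ref{BasisCondition} and~\ref{LinearIndependentCondition2}), so the remaining rows lie in their span---and comparing the first $r$ coordinates against the invertible block $\Tm_{1}$ forces those rows, hence $\Tm_{3}$, to vanish. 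Your block-diagonalization buys a clean formula $rk_{R}(N)=r+rk_{R}(row(\Tm_{3}))$, which is pleasant, at the cost of checking that rank is additive over direct sums (true here via Proposition~\ref{RankOverLocalRing}); the direct route avoids that verification entirely.
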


\begin{proof}
The rows of $\Tm \Qm$ generate $N$. Therefore, by Proposition \ref%
{FreeRankOverLocalRing}, the result follows.
\end{proof}


As in the case of the row reduction algorithm over fields, we have the following:

\begin{corollary}
\label{FreeModuleTest3} Assume that $N$
is a submodule of $R^{n}$ generated by $s$ elements. Then, computing a maximal free submodule of $N$ and testing if $N$ is a free module or not can be done in $ \bigO \left(
ns\min \left\{ n,s\right\} \right) $ operations in $R$.
\end{corollary}

\subsection{Envelope-rank  of Modules}
Over finite principal ideal rings, the rank of a module $M$ is also equal to the smallest value of the ranks of free modules containing $M$ \cite{Kamche2019rank}. This property was used to give a decoding algorithm of Gabidulin codes over finite principal ideal rings in \cite{Kamche2019rank}. However, when dealing with a ring that is not a principal ideal ring, the previous property is not always true. In order to give a decoding algorithm for LRPC over finite rings for all type of errors, we will introduce a new invariant for finitely generated modules contained in a free module over local rings which extend this property.

\begin{definition} \label{EnvRk}
Let $L$ be a free $R-$module of finite rank and $M$ a submodule of $L$.

(i) The \textbf{envelope rank} of $M$, denoted by $\erk_{R}\left( M\right) $,
or simply by $\erk \left( M\right) $, is the smallest value of the ranks of
free submodules of $L$ containing $M$.

(ii) An \textbf{envelope} of $M$ is a free submodule of $L$ of rank $\erk \left( M\right)$ containing $M$.
\end{definition}

\begin{example}\label{ExEnv}
Let $L$ and  $M$ as in Definition \ref{EnvRk}.

1) If $M$ is a free module, then by definition we have $\erk_{R}\left(M\right) =\rk_{R}\left( M\right)$.

2) If $R$ is a finite principal ideal ring then, by \cite[Proposition 3.2.]{Kamche2019rank}, $\erk_{R}(M)=\rk_{R}(M)$.

3) If $R$ is a finite local Frobenius ring then, by \cite[Proposition 7.54.]{Nicholson2003quasi} and \cite[Theorem V.4]{Mcdonald1974finite}, $\erk_{R}\left( M\right) =\rk_{R}\left( E(M)\right) $, where $E(M)$ is the injective hull (or injective envelope) of $M$.
\end{example}

The inner product of two vectors $\mathbf{u}=\left( u_{1},\ldots,u_{n}\right) \in R^{n}$ and $\mathbf{v}=(v_{1},\ldots ,v_{n}) \in R^{n}$ is defined by
\begin{equation*}
\left\langle \mathbf{u},\mathbf{v}\right\rangle =u_{1}v_{1}+\cdots
+u_{n}v_{n}.
\end{equation*}
The dual or orthogonal of a subset $M$ of $R^{n}$ is defined by
\begin{equation*}
M^{\perp }=\left\{ \mathbf{u}\in S^{n}:\left\langle \mathbf{u},\mathbf{v}%
\right\rangle =0,\text{ \ for\ all}\ \mathbf{v}\in C\right\}.
\end{equation*}

According to \cite[Proposition 2.9.]{Fan2014matrix}, we have the following: 
\begin{lemma}
\label{DualOfFreeModule} For any free submodule $F$ of $R^{n}$ of rank $k$, $F^{\perp }$ is a free submodule of $R^{n}$ with rank $n-k$ and, we also have $\left(F^{\perp}\right)^{\perp}=F$.
\end{lemma}

The next proposition gives a relation between the envelope-rank and the free-rank.  

\begin{proposition}
\label{EnvelopeRank}Let $M$ be a submodule of $R^{n}$ and, $F$ be a free
submodule of $M^{\perp }$ such that $\rk \left( F\right) =\frk \left( M^{\perp
}\right) $. Then

1) $F^{\perp }$ is an envelope of $M$.

2) $\erk \left( M\right) =n-\frk \left( M^{\perp }\right) $.
\end{proposition}

\begin{proof}
We have $M\subset \left( M^{\perp }\right) ^{\perp }\subset F^{\perp }$.
Therefore, $\erk \left( M\right) \leq \rk \left( F^{\perp }\right) $. Let $V$ be a free submodule of $R^{n}$ such that $M\subset V$. Then, $V^{\perp
}\subset M^{\perp }$. So, $\rk \left( V^{\perp }\right) \leq \frk \left(
M^{\perp }\right) $. Consequently, $\rk \left( V^{\perp }\right) \leq \rk \left(
F\right) $, i.e., $\rk \left( F^{\perp }\right) \leq \rk \left( V\right) $. Thus, $F^{\perp }$ is an envelope of $M$ and $\erk \left( M\right) =\rk \left(
F^{\perp }\right) =n-\rk \left( F\right) $.
\end{proof}

Proposition \ref{EnvelopeRank} give a method to compute an envelope of a module. So we have the following:

\begin{corollary} \label{ComplexityEnvelope}
 Let $M$ be a submodule of $R^n$. The computation of an envelope of $M$ can be done in $ \bigO(\gamma^4 n^3) $ operations in $R$.  
\end{corollary}

\begin{proof}
By setting $r = \rk(M)$, we have $ r \leq \gamma n$ according to Proposition \ref{BoundRankOverLocalRing}. Thanks to Proposition \ref{EnvelopeRank}, one can compute an envelope of $M$ by following the steps below :

1) We first compute $M^{\perp }$  by solving a linear system of $r$ equations with $n$ unknowns, which can be done thanks to Proposition \ref{EquationLocalRing} in $\bigO \left(nr\gamma ^{3}\min \left\{ n,r\right\} \right)$.

2) Then, we compute a maximal free submodule $F$ of $M^{\perp}$, which can be done in $ \bigO \left(ns\min \left\{ n,s\right\} \right)$, by Corollary \ref{FreeModuleTest3}, where $s=\rk(M^{\perp})\leq \gamma n$.

3) Finally compute $F^{\perp}$ in $ \bigO (n^3)$ according to Corollary \ref{UniqueSolution}. Remember that $F^{\perp}$ is an envelope of $M$.   
\end{proof}


We will now show that if $U$ and $V$ are two envelopes of the same submodule $N$ of $R^{n}$, then $\Psi(U)=\Psi(V)$. To do this, we first need the following lemmas.
\begin{lemma}
\label{Dual}For any free submodule $C$ of $R^{n}$. We have,
\begin{equation*}
\Psi \left(C^{\perp }\right) =\Psi \left(C\right) ^{\perp }\text{.}
\end{equation*}
\end{lemma}

\begin{proof}
Let $\mathbf{x}\in \Psi(C)$ and $\mathbf{y}\in \Psi(C^{\perp})$, then there are $\av\in C$ and $\mathbf{b}\in C^{\perp}$ such that $\mathbf{x=}\Psi (\av)$ and $\mathbf{y}=\Psi(\mathbf{b})$. Consequently, $\langle \mathbf{x},\mathbf{y}\rangle =\langle \Psi ( \av) ,\Psi(\mathbf{b}) \rangle =\Psi(\langle \av,\mathbf{b}\rangle) =0$. That is to say, $\mathbf{y} \in \Psi(C) ^{\perp }$. Thus, $\Psi ( C^{\perp }) \subset \Psi (C)^{\perp }$.

Set $k=\rk_{R}( C) $. By Lemma \ref{DualOfFreeModule}, $C^{\perp }$ is a free module of rank $n-k$. Therefore, by Proposition \ref{RankOverLocalRing}, we have
\begin{equation*}
\rk_{R/\mathfrak{m}}(\Psi(C^{\perp})) =n-k
\end{equation*}
and 
\begin{equation*}
\rk_{R/\mathfrak{m}}(\Psi(C)) =k.
\end{equation*}
Hence,
\begin{equation*}
\rk_{R/\mathfrak{m}}(\Psi(C^{\perp})) =\rk_{R/\mathfrak{m}}(\Psi(C)^{\perp}) \text{.}
\end{equation*}
Thus, $\Psi \left(C^{\perp }\right) =\Psi \left(C\right) ^{\perp }$.
\end{proof}

\begin{lemma}
\label{FreeSubModule} Let $N$ be a submodule of $R^{n}$, with $\frk \left( N\right)
=k$. Let $U$ and $V$ be two free submodules of $N$ with rank $k$. Then $\Psi
\left( U\right) =\Psi \left( V\right) $.
\end{lemma}

\begin{proof}
As $V \subset N$, we have $\Psi \left( V \right) \subset \Psi \left( N\right) $.
By Proposition \ref{FreeRankOverLocalRing}, $\rk_{R/\mathfrak{m}}\left( \Psi
\left( V\right) \right) =\rk_{R/\mathfrak{m}}\left( \Psi \left( N\right)
\right) $. Consequently, $\Psi \left( V\right) =\Psi \left( N\right) $. Using
the same reasoning, we have $\Psi \left( U\right) =\Psi \left( N\right) $.
\end{proof}

\begin{proposition} \label{EnvelopeOnField2}
Given a submodule $N$ of $R^{n}$, let $U$ and $V$ be two envelopes of $N$.
Then, $\Psi \left( U\right) =\Psi \left( V\right) $.
\end{proposition}

\begin{proof}
As $U$ and $V$ are two minimal free submodules of $R^{n}$ that contain $N$, $U^{\perp }$ and $V^{\perp }$ are two maximal free submodules of $%
N^{\perp }$. So, from Lemma \ref{FreeSubModule}, $\Psi \left( U^{\perp
}\right) =\Psi \left( V^{\perp }\right) $. Thus, thanks to Lemma \ref{Dual}, $%
\Psi \left( U\right) =\Psi \left( V\right) $.
\end{proof}

\subsection{Rank, Free-rank, and Envelope-rank of Vectors}
Recall that $S$ is a Galois extension of $R$ of dimension $m$.

\begin{definition}
 Let $\mathbf{u}=\left( u_{1},\ldots ,u_{n}\right) \in S^{n}$. 
\begin{itemize}
\item[(i)] The \textbf{support} of $\mathbf{u}$, denoted by $\supp(\mathbf{u})$, is the $R-$submodule of $S$ generates by $\left\{ u_{1},\ldots
,u_{n}\right\} $.

\item[(ii)] The \textbf{rank} of $\mathbf{u}$, denoted by $\rk_{R}( \mathbf{u}) $, or simply by $\rk(\mathbf{u})$, is the rank of $\supp(\mathbf{u})$.

\item[(iii)] The \textbf{free-rank} of $\mathbf{u}$, denoted by $\frk_{R}( \mathbf{u}) $, or simply by $\frk(\mathbf{u})$, is the free-rank of $\supp(\mathbf{u})$.

\item[(iv)] The \textbf{envelope-rank} of $\mathbf{u}$, denoted by $\erk_{R}( \mathbf{u}) $, or simply by $\erk(\mathbf{u})$, is the envelope-rank of $\supp(\mathbf{u})$.

\end{itemize}
\end{definition}

\begin{definition}
Let $\left( \beta _{1},\ldots ,\beta _{m}\right) $ be a basis of $S$ as a $R$%
-module. Consider $\av=\left( a_{1},\ldots ,a_{n}\right) \in S^{n}$.
For $j=1,\ldots ,n$, $\ a_{j}$ can be written as $a_{j}=%
\sum_{i=1}^{m}a_{i,j}\beta _{i}$ where $a_{i,j}\in R$. The matrix $\Am %
:=\left( a_{i,j}\right) _{1\leq i\leq m,\ 1\leq j\leq n}$ is the \textbf{%
matrix representation} of $\av$ in the $R-$basis $\left( \beta
_{1},\ldots ,\beta _{m}\right) $.
\end{definition}

As in the case of fields, the following proposition gives the relation
between  the rank of a vector and the rank of its matrix representation in a given basis.

\begin{proposition} \label{EnvOfVec}
Let $\av\in S^{n}$ and $\Am $ be the matrix representation of $\av$ in a $R-$basis of $S$. Then,

(i)  $\rk \left( \av \right) =\rk \left( \col \left( \Am \right)
\right) $;

(ii) $\erk \left( \av\right) =\erk \left( \col \left( \Am \right)
\right) =\rk \left( \row \left( \Am \right) \right) $.
\end{proposition}

\begin{proof}
Since $\supp(\av)\cong \col(\Am )$, we have $\rk(\av) =\rk(\col(\Am ))$ and  $\erk( \av) =\erk(\col( \Am  ))$.
Set $u=\rk(\row(\Am )) $.  By Corollary \ref{RankDecompositions}, there exist $\mathbf{P} \in R^{m \times u}$ and $\mathbf{Q} \in R^{u \times n}$ such that $\frk( \col(\mathbf{P}) )=u$, $\row(\Am )=\row(\mathbf{Q})$ and $\Am =\mathbf{PQ}$. As $\col( \Am ) =\col( \mathbf{PQ}) \subset \col(\mathbf{P}) $, then $\erk( \col( \Am ) )\leq \erk( \col( \mathbf{P}) ) $. Since $\mathbf{P} \in  R^{m\times u}$ and  $\frk( \col(\mathbf{P}) )=u$, then $\erk_{R}( \col( \mathbf{P}) ) =u$. Therefore, $\erk(\col(\Am )) \leq \rk(\row(\Am ))$.
By setting $v=\erk(\col(\Am ))$, there exists a free
$R-$submodule $V$ of $R^{m}$ of rank $v$ such that $\col(\Am ) \subset V$. Let $\Em $ be a $m\times v$ matrix with column vectors that generate $V$. Then, there exists a $v \times n$ matrix $\mathbf{F}$ such
that $\Am = \mathbf{EF}$. Since column vectors of $\Em $ are $R-$linearly independents, by \cite[Proposition 2.11]{Fan2014matrix}, there is an $m\times (m-v)$ matrix $\Em ^{\prime}$ such that 
$(
\begin{array}{cc}
\mathbf{E~}  & \Em ^{\prime }
\end{array}
) $ is invertible. So,
\begin{equation*}
\mathbf{A=EF=(
\begin{array}{cc}
\mathbf{E~} & \Em ^{\prime }
\end{array}
) } \left(
\begin{array}{c}
\mathbf{F} \\
\mathbf{0}
\end{array}
\right).
\end{equation*}
Hence, $\row( \Am ) =$ $\row( \mathbf{F}) $. So,
$\rk(\row(\Am )) =\rk(\row(\mathbf{F})) \leq v$. Thus, $\rk(\row(\Am ))
\leq \erk(\col(\Am ))  $.
\end{proof}

\subsection{The Product of Two Submodules \label{ss3}}

\begin{definition}
Let $A$ and $B$ be two $R-$submodules of $S$. The product module of $A$ and $B$, denoted by $AB$, is the $R-$submodule of $S$ generated by $\{
ab : a \in A, b \in B \}$.
\end{definition}

\begin{remark}
Let $A$ and $B$ be two $R-$submodules of $S$, then :

\begin{enumerate}
\item $A+B$, $A\cap B$ and $AB$ are not necessarily free modules. As an
example, see Appendix \ref{Example1Appendix}.

\item $\rk_{R}\left( AB\right) \leq \rk_{R}\left( A\right) \rk_{R}\left(
B \right) $.

\item if $A$ and $B$ are two free modules and $\frk_{R}\left( AB\right)
=\rk_{R}\left( A\right) \rk_{R}\left( B\right) $ then $AB$ is a free module.
\end{enumerate}
\end{remark}

In \cite{GMRZ13}, Gaborit et al. gave a method to find $A$ in the case of finite fields when $AB$ and $B$ are
known. To extend this method to finite rings, we first give the
following definition.

\begin{definition}
Let $B$ be a free $R-$submodule of $S$ of rank $\beta $. We say that $B$ has
the \textbf{square property} if $B^{2}:=BB$ is a free module and there is a
basis $\left\{ b_{1},\ldots ,b_{\beta }\right\} $ of $B$ (called a\textbf{\
suitable basis)} such that $b_{1}=1$ and \ $B\cap b_{i_{0}}B^{\prime
}=\left\{ 0\right\} $ for any $i_{0} \in \left\{ 2,\ldots ,\beta \right\}$,
where $B^{\prime }$ is the submodule generated by $\left\{ b_{2},\ldots
,b_{\beta }\right\} $ and $b_{i_{0}}B^{\prime }=\left\{ b_{i_{0}}x:\ x\in
B^{\prime }\right\} $.
\end{definition}

\begin{proposition}
\label{SquareProperty}Let $B$ be a free $R-$submodule of $S$ of rank $\beta$. Assume that $1\in B$ and $\frk \left( B^{2}\right) =\beta \left( \beta
+1\right) /2$. Then, $B$ has the square property.
\end{proposition}

\begin{proof}
Since $1\in B$ and $\frk \left( B\right) =\beta $, by \cite[Proposition 2.11]
{Fan2014matrix}, there exists a basis $\left\{ b_{1},\ldots ,b_{\beta
}\right\} $ of $B$, with $b_{1}=1$. As the set $\left\{ b_{i}b_{j}\right\}
_{1\leq i\leq j\leq \beta }$ generates $B^{2}$ and $\frk \left( B^{2}\right)
=\beta \left( \beta +1\right) /2$, then $B^{2}$ is a free module and $%
\left\{ b_{i}b_{j}\right\} _{1\leq i\leq j\leq \beta }$ is a basis of 
$B^{2}$. Consequently $B\cap b_{i}B^{\prime }=\left\{ 0\right\} $, for all $i\in
\left\{ 2,\ldots ,\beta \right\} $, where $B^{\prime }$ is the submodule
generated by $\left\{ b_{2},\ldots ,b_{\beta }\right\} $.
\end{proof}

Proposition \ref{SquareProperty} gives sufficient conditions so that $B$ has
the square property. In practice, we observe that the condition $\frk \left(
B^{2}\right) =\beta \left( \beta +1\right) /2$ is satisfied with high
probability when $\beta \ll m$.

\begin{example}
Let $B$ be a free $R-$submodule of $S$ of rank $2$ generated by $\left\{
1,b\right\} $. If $\left\{ 1,b,b^{2}\right\} $ is linearly independent, then
$B$ has the square property.
\end{example}

By Proposition \ref{RankCondition}, we have the following Lemma.

\begin{lemma}
\label{IntersectionProductCondition1} Let $A$ and $U$ be two free $R-$submodules
of $S$ with ranks respectively $\alpha $ and $\mu $. Let $\left\{ a_{1},\ldots
,a_{\alpha }\right\} $ be a basis of $A$. Assume that $\frk \left( AU\right)
=\alpha \mu $. Then, for each $x$ in $AU$, there are unique $x_{1},\ldots
,x_{\alpha }$ in $U$ such that $x=x_{1}a_{1}+\cdots +x_{\alpha }a_{\alpha }$.
\end{lemma}

We also have the following Lemma from \cite{Kamche2019rank} :

\begin{lemma}
\label{unit}Let $y \in S$. If $\left\{ y\right\} $ is linearly independent
over $R$, then $y$ is a unit.
\end{lemma}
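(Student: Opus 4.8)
The plan is to exploit that $S$ is itself a finite local ring with maximal ideal $\mathfrak{M}=\mathfrak{m}S$ and residue field $\F_{q^m}$, so that an element $y\in S$ is a unit precisely when it lies outside $\mathfrak{M}$, i.e.\ when its image under the canonical projection $S\to S/\mathfrak{M}=\F_{q^m}$ is nonzero. I will then connect this membership test to linear independence over $R$ by reducing modulo $\mathfrak{M}$, using that $S$ is a free $R$-module of rank $m$. In fact I expect to obtain the stronger equivalence that $\{y\}$ is linearly independent over $R$ if and only if $y$ is a unit; the statement asks only for one direction of this.

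First I would apply Lemma~\ref{LinearIndependentCondition2} to the free $R$-module $F=S$ of rank $m$ and the single element $y$. Since $\mathfrak{m}F=\mathfrak{m}S=\mathfrak{M}$ and $F/\mathfrak{m}F=S/\mathfrak{M}=\F_{q^m}$, that lemma gives that $\{y\}$ is linearly independent over $R$ if and only if $\{\,y+\mathfrak{M}\,\}$ is linearly independent over $\F_q$ inside $\F_{q^m}$. A single vector in an $\F_q$-vector space is linearly independent exactly when it is nonzero, so this holds if and only if $y+\mathfrak{M}\neq 0$, that is, $y\notin\mathfrak{M}$.

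Finally I would invoke the local structure of $S$: because $\mathfrak{M}$ is the unique maximal ideal, every element of $S\setminus\mathfrak{M}$ is a unit. Hence $\{y\}$ linearly independent over $R$ forces $y\notin\mathfrak{M}$ and therefore $y$ is a unit, which is exactly the claim; conversely a unit cannot lie in $\mathfrak{M}$, which would give the full equivalence if one wanted it.

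I expect the only delicate point to be the translation between ``$\{y\}$ linearly independent over the ring $R$'' (which, over a ring with zero divisors, is \emph{not} the same as $y\neq 0$) and nonvanishing in the residue field; this is precisely what Lemma~\ref{LinearIndependentCondition2} supplies, so the argument stays short. An alternative, self-contained route avoids that lemma by nilpotency: in a finite local ring $\mathfrak{m}$ is nilpotent, say $\mathfrak{m}^{\nu}=0$ with $\mathfrak{m}^{\nu-1}\neq 0$, so if $y$ were a non-unit, hence $y\in\mathfrak{m}S$, then any nonzero $\alpha\in\mathfrak{m}^{\nu-1}$ would satisfy $\alpha y\in\mathfrak{m}^{\nu}S=\{0\}$ with $\alpha\neq 0$, contradicting the independence of $\{y\}$ (the case $\mathfrak{m}=0$, where $R$ is a field, being trivial). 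Either way there is no real obstacle beyond correctly identifying $\mathfrak{m}S$ as the maximal ideal of $S$.
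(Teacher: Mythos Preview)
Your argument is correct; in fact both routes you outline (via Lemma~\ref{LinearIndependentCondition2} and via nilpotency of $\mathfrak{m}$) go through cleanly, and the identification $\mathfrak{m}S=\mathfrak{M}$ together with the local structure of $S$ is exactly what is needed. The paper itself does not spell out a proof here but simply refers to \cite[Lemma~2.4]{Kamche2019rank}, so your write-up is more explicit than the original; there is nothing to correct.
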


\begin{proof}
The proof is similar to that \cite[Lemma 2.4]{Kamche2019rank}.
\end{proof}

\begin{proposition}
\label{IntersectionProductCondition2} Let $A$ and $B$ be free $R-$submodules
of $S$ of ranks $\alpha $ and $\beta $ respectively, such that $B$ has the
square property. Assume that $\left\{ b_{1},\ldots ,b_{\beta }\right\} $ is a
suitable basis of $B$. By setting $A^{\prime }=\cap _{i=1}^{\beta }\left(
b_{i}^{-1}AB\right)$ \footnote{Note that the existence of $b_i^{-1}$ (for $i=1, \cdots, \beta$) is a direct consequence of Lemma \ref{unit}}, $\beta _{2}=\rk \left( B^{2}\right) $ and assuming
$\frk \left( AB^{2}\right) =\alpha \beta _{2}$, we have
\begin{itemize} 
\item[(1)] $\frk \left( AB\right) =\alpha \beta $;
\item[(2)] $A=A^{\prime }$.
\end{itemize}
\end{proposition}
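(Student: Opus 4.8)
The plan is to reduce everything to the unique-representation property of Lemma~\ref{IntersectionProductCondition1}, applied twice: once with $U=B^{2}$ (fed by the hypothesis $frk(AB^{2})=\alpha\beta_{2}$) and once with $U=B$ (fed by part~(1)). Throughout I use that each $b_{i}$ is a unit: the singleton $\{b_{i}\}$ is a subset of a basis, hence linearly independent, so $b_{i}\in S^{\ast}$ by Lemma~\ref{unit}; this is what makes $b_{i}^{-1}AB$ meaningful. Recall also that $B^{2}$ is free (by the square property) of rank $\beta_{2}=rk(B^{2})$, so the hypothesis $frk(AB^{2})=\alpha\beta_{2}$ is exactly the hypothesis needed to invoke Lemma~\ref{IntersectionProductCondition1} with $U=B^{2}$.

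For part~(1), note first that $1=b_{1}\in B$ gives $B=1\cdot B\subseteq B^{2}$, hence $AB\subseteq AB^{2}$. By Lemma~\ref{IntersectionProductCondition1} with $U=B^{2}$, every element of $AB^{2}$ has a unique expansion $\sum_{k}a_{k}x_{k}$ with $x_{k}\in B^{2}$. I then show the $\alpha\beta$ generators $\{a_{i}b_{j}\}$ of $AB$ are linearly independent: a relation $\sum_{i,j}c_{ij}a_{i}b_{j}=0$ regroups as $\sum_{i}a_{i}\big(\sum_{j}c_{ij}b_{j}\big)=0$ with coefficients $y_{i}:=\sum_{j}c_{ij}b_{j}\in B\subseteq B^{2}$; uniqueness of the $B^{2}$-expansion of $0$ forces every $y_{i}=0$, and since $\{b_{j}\}$ is a basis of $B$ all $c_{ij}=0$. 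Thus $\{a_{i}b_{j}\}$ is a basis of $AB$, so $AB$ is free of rank $\alpha\beta$ and $frk(AB)=\alpha\beta$ by Proposition~\ref{RankCondition}.

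For part~(2), the inclusion $A\subseteq A'$ is immediate: for each $i$, $b_{i}A\subseteq AB$ (as $b_{i}\in B$), so $A\subseteq b_{i}^{-1}AB$, whence $A\subseteq\bigcap_{i}b_{i}^{-1}AB=A'$. For the reverse inclusion take $x\in A'$, so $b_{i}x\in AB$ for every $i$; in particular $x=b_{1}x\in AB$, and by part~(1) together with Lemma~\ref{IntersectionProductCondition1} (now with $U=B$) there are unique $u_{1},\dots,u_{\alpha}\in B$ with $x=\sum_{k}a_{k}u_{k}$. The goal is to prove each $u_{k}\in R\cdot 1$, for then $x\in A$. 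The crux is to promote the memberships $b_{i}x\in AB$ into multiplicative stability $Bu_{k}\subseteq B$: indeed $b_{i}x=\sum_{k}a_{k}(b_{i}u_{k})$ exhibits $b_{i}x\in AB^{2}$ with coefficients $b_{i}u_{k}\in B^{2}$, while the unique $AB$-expansion of $b_{i}x$ has coefficients in $B\subseteq B^{2}$; comparing the two via uniqueness in $AB^{2}$ forces $b_{i}u_{k}\in B$ for all $i,k$. Finally the suitable basis finishes it: writing $u_{k}=\lambda+u_{k}'$ with $\lambda\in R$ and $u_{k}'\in B'$, the element $b_{i_{0}}u_{k}'=b_{i_{0}}u_{k}-\lambda b_{i_{0}}$ lies in $B$ (both terms do) and in $b_{i_{0}}B'$, hence in $B\cap b_{i_{0}}B'=\{0\}$; as $b_{i_{0}}$ is a unit, $u_{k}'=0$ and $u_{k}=\lambda\in R\cdot 1$. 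Therefore $x\in A$, giving $A'\subseteq A$ and $A=A'$.

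The main obstacle is the middle step of part~(2): transferring membership $b_{i}x\in AB$ into multiplicative stability $Bu_{k}\subseteq B$. This is precisely where the full strength of $frk(AB^{2})=\alpha\beta_{2}$ is consumed, since it guarantees that the $B^{2}$-coefficient expansion in $AB^{2}$ is unique, forcing the two a priori different expansions of $b_{i}x$ (one with $B$-coefficients from $b_{i}x\in AB$, one with the $B^{2}$-coefficients $b_{i}u_{k}$) to coincide. Everything else is bookkeeping, and the suitable-basis clause $B\cap b_{i_{0}}B'=\{0\}$ enters only at the very end, and only for the single index $i_{0}$.
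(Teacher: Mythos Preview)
Your proof is correct and follows essentially the same approach as the paper's: both parts hinge on the unique-coefficient property of Lemma~\ref{IntersectionProductCondition1} applied with $U=B^{2}$, followed by the suitable-basis condition $B\cap b_{i_{0}}B'=\{0\}$ to force the $B$-coefficients of $x$ into $R\cdot 1$. The only cosmetic difference is that for part~(1) the paper first extends $\{b_{1},\dots,b_{\beta}\}$ to a basis of $B^{2}$ and then invokes Proposition~\ref{RankCondition} on the generating set $\{a_{i}b_{j}\}_{j\leq\beta_{2}}$, whereas you bypass the extension and argue directly via uniqueness of the zero expansion; the substance is identical.
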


\begin{proof}
Let $\left\{ a_{1},\ldots ,a_{\alpha }\right\} $ be a basis of $A$.

(1) Since $B\subset B^{2}$, by \cite[Proposition 2.11]{Fan2014matrix}, there
exist $b_{\beta +1},\ldots ,b_{\beta _{2}}$ in $B^{2}$ such that $\left\{
b_{1},\ldots ,b_{\beta },b_{\beta +1},\ldots ,b_{\beta _{2}}\right\} $ is a
basis of $B^{2}$. Since $AB^{2}$ is a free module of rank $\alpha \beta _{2}$
, then $\left\{ a_{i}b_{j}\right\} _{1\leq i\leq \alpha ,1\leq j\leq \beta
_{2}}$ is $R-$linearly independent. Consequently, $\frk \left( AB\right)
=\alpha \beta $.

(2) One can first remark that $A\subset A^{\prime }$ and $AB=A^{\prime }B$.
Given $x\in A^{\prime }$, for $i=1,\ldots ,\beta $ we have $x\in b_{i}^{-1}AB$ which also means $b_{i}x\in AB$. Since $b_{1}=1$,  $x \in AB$. So,
$x=\sum_{j=1}^{\alpha }x_{j}a_{j}$, where $x_{j}\in B$. Therefore,
 $b_{i}x=\sum_{j=1}^{\alpha }b_{i}x_{j}a_{j}$. As $b_{i}x\in A^{\prime }B=AB$, then
$b_{i}x=\sum_{j=1}^{\alpha }y_{i,j}a_{j}$, where $y_{i,j}\in B$.
Consequently, $\sum_{j=1}^{\alpha }b_{i}x_{j}a_{j}=\sum_{j=1}^{\alpha
}y_{i,j}a_{j}$. Since $\frk \left( AB^{2}\right) =\alpha \beta _{2}$, by Lemma
\ref{IntersectionProductCondition1}, $b_{i}x_{j}=y_{i,j}$, for $j=1,\ldots
,\alpha $. As $x_{j}\in B$, $x_{j}=\sum_{l=1}^{\beta }x_{j,l}b_{l}$,
where $x_{j,l}\in R$. So, $b_{i}x_{j}=b_{i}\sum_{l=1}^{\beta }x_{j,l}b_{l}$,
i.e., $b_{i}x_{j}-b_{i}x_{j,1}=b_{i}\sum_{l=2}^{\beta }x_{j,l}b_{l}$,
because $b_{1}=1$. Since, $b_{i}x_{j}=y_{i,j}\in B$, $%
b_{i}x_{j}-b_{i}x_{j,1}\in B\cap b_{i}B^{\prime }$, where $B^{\prime }$ is
the submodule generated by $\left\{ b_{2},\ldots ,b_{\beta }\right\} $.
Furthermore, $\left\{ b_{1},\ldots ,b_{\beta }\right\} $ is a suitable basis of $B$, so
there exists $i_{0}\in \left\{ 2,\ldots ,\beta \right\} $ such that $B\cap
b_{i_{0}}B^{\prime }=\left\{ 0\right\} $. Consequently, $x_{j}=x_{j,1}$.
Therefore, $x\in A$.
\end{proof}

\begin{remark}
With the same notations as in Proposition \ref{IntersectionProductCondition2}
:

(i) When $R$ is a finite field, \cite[Proposition 3.5.]{Aragon2019low} gives
a lower bound on the probability that $A=A^{\prime }$ with the assumption
that $m$ is prime. By Proposition \ref{IntersectionProductCondition2}, we can have the same result if $B$ has the
square property.

(ii) When $R=\mathbb{Z}_{p^{r}}$, \cite[Theorem 14.]{Renner2020low} gives a similar result with the
assumption that $m$ is chosen such that the smallest intermediate ring
$R^{\prime }$ between $R$ and $S$, $R\varsubsetneq R^{\prime }\subseteq S$
has cardinality greater than $\left\vert R\right\vert ^{\lambda }$. But this
assumption is not true in general. Indeed, suppose for example that
$R=\mathbb{Z}_{4}$ and $\lambda \geq 2$. Since $S$ is a Galois extension of $R$, by
\cite{Mcdonald1974finite}, there exists $\theta \in S$ such that
$S=\mathbb{Z}_{4}\left[ \theta \right] $. Then, $\mathbb{Z}_{4}+2\theta\mathbb{Z}_{4}$
is an intermediate ring between $\mathbb{Z}_{4}$ and
$\mathbb{Z}_{4}\left[ \theta \right]$, and
$\left\vert\mathbb{Z}_{4}+2\theta\mathbb{Z}_{4}\right\vert =8<\left\vert\mathbb{Z}_{4}\right\vert ^{\lambda }$.
\end{remark}


\section{LRPC Codes Over Finite Local Rings}\label{sec:lrpc_flr}

In this section, we use the same notations as in Section \ref{sec:prelims}, that
is to say, $R$ is a local ring with maximal ideal $\mathfrak{m}$ and
residue field $\mathbb{F}_{q}=R/\mathfrak{m}$, and $S$ is a Galois extension
of $R$ of dimension $m$.

\subsection{Encoding and Decoding LRPC Codes Over Local Rings}

\begin{definition}\label{LPRCcodeLocalRing} Given three positive integers $k$, $n$ and $\lambda $ such that $0<k<n$, let $\mathcal{F}$ be a free $R-$submodule of $S$ of rank $\lambda$ and $\mathbf{H=}\left( h_{i,j}\right) \in S^{(n-k)\times n}$ such that $%
\frk_{S}\left( \row \left( \Hm \right) \right) =n-k$ and $\mathcal{F=}%
\left\langle h_{1,1},\ldots ,h_{\left( n-k\right) ,n}\right\rangle _{R}$. A
\textbf{low-rank parity-check code} with parameters $k$, $n$, $\lambda $ is
a code with a parity-check matrix $\Hm $.
\end{definition}

As in \cite{Renner2020low}, we need the following definition :

\begin{definition}
\label{ParityCheckMatrixExt}Let $\lambda $, $\mathcal{F}$ and $\Hm $
be defined as in Definition \ref{LPRCcodeLocalRing}. Let $\left\{
f_{1},\ldots ,f_{\lambda }\right\} $ be a basis of $\mathcal{F}$. For $%
i=1,\ldots ,n$ and $j=1,\ldots ,n-k$, let $h_{i,j,v}\in R$ such that $h_{i,j}=\sum_{v =1}^{\lambda }h_{i,j,v }f_{v}$. Set
\begin{equation*}
\Hm _{\ext}=\left(
\begin{array}{cccc}
h_{1,1,1} & h_{1,2,1} & \cdots & h_{1,n,1} \\
h_{1,1,2} & h_{1,2,2} & \cdots & h_{1,n,2} \\
\vdots & \vdots & \ddots & \vdots \\
h_{2,1,1} & h_{2,2,1} & \cdots & h_{2,n,1} \\
h_{2,1,2} & h_{2,2,2} & \cdots & h_{2,n,2} \\
\vdots & \vdots & \ddots & \vdots%
\end{array}%
\right) \in R^{\left( n-k\right) \lambda \times n}.
\end{equation*}

Then, $\Hm $ has the
\begin{enumerate}

    \item[1)] \textbf{unique-decoding property} if $\lambda \geq \frac{n}{n-k}$ and $%
    \frk_{R}\left( \col \left( \Hm _{\ext}\right) \right) =n$,
    
    \item[2)] \textbf{maximal-row-span property} if every row of $\Hm $ span the entire space $\mathcal{F}$,
    
    \item[3)] \textbf{unity property} if every entry $h_{i,j}$ of $\Hm $ is
    chosen in the set \\ $\left\{ \sum_{i=\lambda }^{n}\alpha _{i}f_{i}: \alpha
    _{i}\in R^{\times }\cup \left\{ 0\right\} \right\} $.
    
\end{enumerate}
\end{definition}

\begin{lemma}
\label{ErasureDecoding}(Erasure Decoding) Given $\mathcal{F}$ and $\Hm $
as in Definition \ref{LPRCcodeLocalRing}, consider $\left\{ f_{1},\ldots
,f_{\lambda }\right\} $ and $\Hm _{\ext}$ as given in the above definition. For $\mathcal{V}^{\prime }$ being a free $R-$submodule of $S$ of rank $t^{\prime }$ such that $\mathcal{V}^{\prime }\mathcal{F}$ is a free module of rank $\lambda t^{\prime }$, consider $\mathbf{s}=(s_{1},\ldots ,s_{n-k}) \in S^{n-k}$ such that $s_{i}\in \mathcal{V}^{\prime }\mathcal{F}$, for $i=1,\ldots ,n-k$. Let $\left\{ b_{1},\ldots
,b_{t^{\prime }}\right\} $ be a basis of $\mathcal{V}^{\prime}$ and $s_{i,u,v}$ in $R$ such that, for all $i$ in $\left\{ 1,\ldots ,n-k\right\} $,
\begin{equation}
s_{i}\mathbf{=}\sum_{1\leq u\leq t^{\prime },1\leq v\leq \lambda
}s_{i,u,v}b_{u}f_{v}.  \label{SyndromeDecomposition1}
\end{equation}%
Assume that there exists $\mathbf{e=}\left( e_{1},\ldots ,e_{n}\right)$ in $S^{n}$ such that $\supp(\mathbf{e)}\subset \mathcal{V}^{\prime }$ and
\begin{equation}
\mathbf{He}^{\top }=\mathbf{s}^{\top }.  \label{SyndromeEquation1}
\end{equation}
For all $i$ in $\left\{ 1,\ldots ,n\right\} $, let $e_{i,u}$ in $R$ such that
\begin{equation}
e_{i}=\sum\limits_{u=1}^{t^{\prime }}e_{i,u}b_{u}.
\label{ErrorDecomposition1}
\end{equation}%
Then, Equation (\ref{SyndromeEquation1}) with unknown $\ev$ is
equivalent to the following equation
\begin{equation}
\Hm _{\ext}\left(
\begin{array}{c}
e_{1,u} \\
\vdots \\
e_{n,u}%
\end{array}%
\right) =\left(
\begin{array}{c}
s_{1,u,1} \\
s_{1,u,2} \\
\vdots \\
s_{2,u,1} \\
s_{2,u,2} \\
\vdots%
\end{array}%
\right)  \label{ErasureEquation1}
\end{equation}
$\ $for$\ u=1,\ldots ,t^{\prime }$, with unknowns $e_{i,u}$. Moreover, if $%
\Hm $ fulfils the unique-decoding property, then there is at most one
$\ev\in S^{n}$ such that $\supp(\mathbf{e)}\subset \mathcal{V}^{\prime
}$ and $\mathbf{He}^{\top}=\mathbf{s}^{\top}$.
\end{lemma}

\begin{proof}
The proof is similar to that of \cite[Lemma 4]{Renner2020low}.
\end{proof}

Lemma \ref{ErasureDecoding} allows giving Algorithm \ref{DecodingAlgorithm1}.

{
\begin{algorithm}[ht]
\label{DecodingAlgorithm1}
\caption{%
Decoding LRPC Codes Over Finite Local Rings}
\DontPrintSemicolon
\KwIn{

$
\bullet $ LRPC parity-check matrix $\Hm $ (as in Definition \ref{LPRCcodeLocalRing})

$\bullet $ a basis $\left\{ f_{1},\ldots ,f_{\lambda
}\right\} $ of $
\mathcal{F}$

$\bullet$ $\rv=\cv+\ev
$, such that $\cv$ is in
the LRPC code $\mathcal{C}$ defined by $\Hm $, and $\ev \in S^{n}$.
}
\KwOut{Codeword $\cv
^{\prime }$ of $\mathcal{C}$ or
\textquotedblleft decoding failure".
}

$\mathbf{s}
=\left( s_{1},\ldots ,s_{n-k}\right) \longleftarrow \mathbf{rH}
^{T}$

\If{$\mathbf{s=0}$}{
    \Return{$\rv$}
}

$\mathcal{S}
\longleftarrow \left \langle s_{1},\ldots
,s_{n-k}\right \rangle _{R}$

\For{$i=1,\ldots ,\lambda $}{
$\mathcal{S}_{i}\longleftarrow f_{i}^{-1}
\mathcal{S=}\left \{
f_{i}^{-1}s:s\in \mathcal{S}\right \} $
}

$
\mathcal{E}^{\prime }\longleftarrow \cap _{i=1}^{\lambda }\mathcal{S}
_{i}$

\If{$\mathcal{E}^{\prime }=\{0\} $}{
    \Return{\textquotedblleft decoding failure"}
}

Compute an envelope $\mathcal{V}^{\prime }$ of $\mathcal{E}^{\prime }$.

Choose a basis $\left\{ b_{1},\ldots ,b_{t^{\prime }}\right\}$ of
$\mathcal{V}^{\prime }$.

Solve Equation (\ref{SyndromeDecomposition1}) with unknowns $s_{i,u,v}$.

\eIf{ Equation (\ref{SyndromeDecomposition1}) has no solution}{
\Return{\textquotedblleft decoding failure"}}{

Use a solution of Equation (\ref{SyndromeDecomposition1}) to solve
 Equation (\ref{ErasureEquation1}) with unknowns $e_{i,u}$.

\eIf{
Equation (\ref{ErasureEquation1}) has no solution}{
    \Return{\textquotedblleft decoding failure"}}{
Use a solution of Equation (\ref{ErasureEquation1}) to compute $\ev$
as in (\ref{ErrorDecomposition1}).

\Return{$\mathbf{r-e}$}
}
}
\end{algorithm}}

\begin{theorem}
\label{CorrectnessOfAlgorithm1}With the same notations of Algorithm %
\ref{DecodingAlgorithm1}, let $\mathcal{E}$ be the support of the error vector $%
\ev$, and $t^{\prime }$ be the rank of $\mathcal{V}^{\prime }$. Assume
that $\Hm $ fulfils the unique-decoding property. Then, Algorithm \ref%
{DecodingAlgorithm1} with input $\mathbf{r=c+e}$ returns $\cv$ if the
following two conditions are fulfilled:

(i) $\mathcal{S}=\mathcal{EF}$;

(ii) $\frk_{R}\left( \mathcal{V}^{\prime }\mathcal{F}\right) =\lambda
t^{\prime }$.\newline
\end{theorem}

\begin{proof}
Assume that the two conditions are fulfilled. As  $\mathcal{S}=%
\mathcal{EF}$, then $\mathcal{E}\subset\cap _{i=1}^{\lambda }\mathcal{S}_{i} \subset\mathcal{V}^{\prime }$. Hence, by Lemma \ref{ErasureDecoding}, the result follows.
\end{proof}

\begin{corollary}
\label{Coro1CorrectnessOfAlgorithm1}With the same notations of Algorithm \ref{DecodingAlgorithm1}, assume that $\Hm $ fulfils the
unique-decoding property and that the envelope-rank of the error vector $%
\ev\in S^{n}$ is $t$. Let $\mathcal{E}$ be the support of $\ev$
and $\mathcal{V}$ an envelope of $\ev$. Then, Algorithm \ref%
{DecodingAlgorithm1} with input $\mathbf{r=c+e}$ returns $\cv$ if the
following three conditions are fulfilled:

(i) $\mathcal{S}=\mathcal{EF}\qquad $

(ii) $\cap _{i=1}^{\lambda }\left( f_{i}^{-1}\mathcal{VF}\right) =\mathcal{V}
$

(iii) $\frk_{R}\left( \mathcal{VF}\right) =\lambda t$
\end{corollary}

\begin{proof}
Assume that the three conditions are fulfilled. Then, $\mathcal{E}\subset
\cap _{i=1}^{\lambda }\left( f_{i}^{-1}\mathcal{S}\right) \subset \mathcal{V}
$. Consequently, $\mathcal{V}$ is also an envelope of $\cap _{i=1}^{\lambda
}\left( f_{i}^{-1}\mathcal{S}\right) $. Thus, by Theorem \ref{CorrectnessOfAlgorithm1}, the result follows.
\end{proof}

Proposition \ref{IntersectionProductCondition2} reduces the three conditions of Corollary \ref{Coro1CorrectnessOfAlgorithm1} to two conditions. So, we have the following:

\begin{corollary}\label{Coro2CorrectnessOfAlgorithm1}
With the same notations of Algorithm \ref{DecodingAlgorithm1}, assume that 

$\bullet $\ $\Hm $ fulfils the unique-decoding property and that the envelope-rank of the error vector $\ev\in S^{n}$ is $t$.

$\bullet $ $\mathcal{F}$ has the square property and the basis $\left\{
f_{1},\ldots ,f_{\lambda }\right\} $ used in Algorithm \ref{DecodingAlgorithm1} is a suitable basis of $\mathcal{F}$, with $\rk(\mathcal{F}^{2}) :=\lambda _{2}$.

Let $\mathcal{E}$ be the support of $\ev$ and $\mathcal{V}$ an envelope of $\ev$. Then,
Algorithm \ref{DecodingAlgorithm1} with input $\mathbf{r=c+e}$ returns $%
\cv$ if the following two conditions are fulfilled:

(i) $\mathcal{S}=\mathcal{EF}$

(ii) $\frk \left( \mathcal{VF}^{2}\right) =t\lambda _{2}$ 
\end{corollary}

In what follows, we discuss Algorithm \ref{DecodingAlgorithm1} complexity. Let us start by recalling that an addition in $S$ is equivalent to $m$ additions in $R$, and a multiplication in $S$ can be done in $\bigO(mlog(m)log(log(m)))\subset \bigO(m^{2})$ operations in $R$ \cite{von2013modern}. We thus have the following theorem :

\begin{theorem}\label{ComplexityLocalRing}
The complexity of Algorithm \ref{DecodingAlgorithm1} is $\mathcal{O}(\lambda^2 \gamma
^{4}n^{2}m^{2} \max{\{n,m\}})$ operations in $R$.
\end{theorem}

\begin{proof}
\begin{itemize}

\item \textbf{Line 1} computes the syndrome of the received word. This is a
matrix multiplication by a vector, and is done in $\bigO(n(n-k))\subseteq
\bigO(n^{2})$ operations in $S,$ i.e. $\bigO(n^{2}m^{2})$ operations in $R.$

\item \textbf{Lines 5-6} compute the submodules $\mathcal{S}_{i}.$ We have a generating family for $\mathcal{S}_{i}$ by multiplying
the elements of the generating set of $\mathcal{S}$ by $f_{i}^{-1}.$ So we have $n-k$
multiplications in $S$ for each $i$. Thus, the complexity is $\bigO(\lambda(n-k) ) \subset \bigO\left( \lambda n\right) $ multiplications
in $S,$ i.e. $\bigO(\lambda nm^{2})$ operations in $R.$

\item \textbf{Line 7 }successively calculates the intersection of $\mathcal{S}_{i}$. Since $S\cong R^{m}$, by Corollary \ref{IntersectionComplexity1}, Lines 7 can be done in $\mathcal{O}(\lambda\gamma ^{4}m^{3})$ operations in $R$.

\item \textbf{Line 10} computes an envelope of $\cap_{i=1}^{\lambda }\mathcal{S}_{i}$. According to Corollary \ref{ComplexityEnvelope}, this can be done in $\mathcal{O}(\gamma^4 m^{3})$ operations in $R$.

\item \textbf{Line 12} solves Equation (\ref{SyndromeDecomposition1}). Since the coefficients of this equation are $b_uf_v \in S$ and the unknowns $s_{i,u,v} \in R$, to solve this equation, we first expand them over $R$ to obtain  a linear system of $m(n-k)$ equations and $\lambda t' (n-k)$ unknowns. Thus, according to Proposition \ref{EquationLocalRing}, lines 12 can be done in $\mathcal{O}(\lambda^2 \gamma^3 m^3 n^2)$ operations in $R$.

\item \textbf{Line 16} solves $t'$ linear systems of $\lambda ( n-k)$ equations and $n$ unknowns in $R$ with the same  matrix $\Hm _{\ext}$. Since $\Hm $ satisfies the unique-decoding property, and $t^{\prime }\leq m$, by Corollary \ref{UniqueSolution}, line 16 can be done in $\mathcal{O}(\lambda m n^{3})$ operations in $R$.

\item \textbf{Line 20 }computes $\ev$ using a basis of $\mathcal{V}^{\prime }$. Since $t^{\prime }\leq m$, Lines 20 can be done in $\mathcal{O}(nm)$ operations in $S,$ i.e. $\mathcal{O}(nm^{3})$ operations in $R.$
\end{itemize}

Thus, we have the desired result.
\end{proof}

\subsection{Failure Probability}

In this section, we study the success probability of Algorithm \ref{DecodingAlgorithm1}. Remember that the conditions (i) and (ii) given in Corollary \ref{Coro2CorrectnessOfAlgorithm1} guarantee the success of our algorithm. Therefore, estimating the probability of realization of these conditions will make it possible to easily give an upper bound for the failure probability of Algorithm \ref{DecodingAlgorithm1}. In what follows, the condition (i) of Corollary \ref{Coro2CorrectnessOfAlgorithm1} will be called ``Syndrome Condition'', while ``the Product Condition'' will designate condition (ii). 
\subsubsection{Probability of the Product Condition}

According to  \cite[Proposition 3.3]{Aragon2019low}, we have the following:

\begin{lemma}
\label{ProductCond1} Let $B$ be a fixed $\mathbb{F}_{q}-$subspace of $\mathbb{F}_{q^{m}}$ with dimension $\beta$, and $A$ be drawn uniformly at random from the set of $\mathbb{F}_{q}-$subspaces of $\mathbb{F}_{q^{m}}$ of dimension $\alpha$. Then, the probability that the dimension of $AB$ is equal to $\alpha \beta $ is
\begin{equation*}
\Pr(\dim(AB) =\alpha \beta) \geq 1-\alpha q^{\alpha \beta -m}.
\end{equation*}
\end{lemma}
In the following, we extend the above lemma to finite local rings. Set $\mathcal{P}=\{\mathbf{e } \in S^{n}:\erk(\ev)=t\}$ and $\mathcal{Q}$
be the set of  $\mathbb{F}_{q}-$subspaces of $\mathbb{F}_{q^{m}}$ of
dimension $t$. Let $\ev \in \mathcal{P}$ and $\mathcal{V}_{\ev}$
be an envelope of $\ev$. Then, thanks to Proposition \ref{FreeRankOverLocalRing}, $\Psi(\mathcal{V}_{\ev})\in \mathcal{Q}$. Furthermore, by Proposition \ref{EnvelopeOnField2}, $\Psi(\mathcal{V}_{\ev})$ does not depend on the choice of the envelope $\mathcal{V}_{\ev}$. \ So, the map $f:\mathcal{P\longrightarrow Q}$,
given by $f\left( \ev\right) =\Psi (\mathcal{V}_{\ev})$ is
well-defined. We would like to prove that sampling $\ev$ uniformly at random from $\mathcal{P}$ and then compute the image $f\left( \ev\right)$ is equivalent to sample uniformly at random in $\mathcal{Q}$. To achieve this, we just need to
prove the following:
\begin{lemma} \label{ProductCond2}
With the above notations, let $U$ and $U^{\prime }$ in $\mathcal{Q}$ such that $U\neq U^{\prime }$.  Then, $f^{-1}\left( U\right) $ and  $f^{-1}\left( U^{\prime }\right)$ have the same number of elements, where $f^{-1}\left( U\right) =\{\ev\in \mathcal{P}:$ $f\left( \ev\right) =U\}$.  
\end{lemma}

\begin{proof}  Since $U$ and $U^{\prime }$ are two vector spaces with the same
dimension, there is a $\mathbb{F}_{q}-$linear bijection $\overline{g}:\mathbb{F}_{q^{m}}\longrightarrow \mathbb{F}_{q^{m}}$ such that $\overline{g}\left( U\right) =U^{\prime }$.  Let $g:S\longrightarrow S$ be an $R-$linear
bijection such that $\overline{g}\circ \Psi =\Psi \circ g$, that is, the following diagram is commutative.

\begin{center}
\begin{tabular}{lll}
\multicolumn{1}{r}{$S$} & $\overset{g}{\longrightarrow }$ & $S$ \\
${\small \Psi }\downarrow $ &  & $\downarrow {\small \Psi }$ \\
\multicolumn{1}{r}{$\mathbb{F}_{q^{m}}$} & $\overset{\overline{g}}{%
\longrightarrow }$ & $\mathbb{F}_{q^{m}}$%
\end{tabular}
\end{center}

 To construct $g$, we first choose a $R-$basis $\{b_{1},\ldots ,b_{m}\}$ of $S$. Then, we choose $c_{1},\ldots ,c_{m}$ in $S$ such that ${\small \Psi (}c_{i})=\overline{g}({\small \Psi (b}_{i}))$, for $i=1,\ldots ,m$. Finally, $g$
is defined by $g(b_{i})=c_{i}$, for $i=1,\ldots ,m$. As $\overline{g}\circ\Psi =\Psi \circ g$, we also
have $\left( \overline{g}\right) ^{-1}\circ \Psi =\Psi \circ g^{-1}$, where $%
g^{-1}$ and $\left( \overline{g}\right) ^{-1}$ are respectively the inverse
functions of  $g$ and $\overline{g}$.  We extend $g$ and $g^{-1}$
coefficient by coefficient as a map from $S^n$ to $S^{n}$.

Let $\ev \in  f^{-1}\left( U\right) $, then $f\left( g\left( \ev\right) \right)
=\Psi (\mathcal{V}_{g\left( \ev\right) })=\Psi (g\left( \mathcal{V}_{%
\ev}\right) )=\overline{g}\left( \Psi (\mathcal{V}_{\ev%
})\right) =\overline{g}\left( f\left( \ev\right) \right) =U^{\prime }$%
, this implies that $g\left( \ev\right) \in f^{-1}\left( U^{\prime }\right) $.
As $\left( \overline{g}\right) ^{-1}\circ \Psi =\Psi \circ g^{-1} $, if $%
\ev^{\prime } \in f^{-1}\left( U^{\prime }\right) $ then we
also have  $g^{-1}\left( \ev^{\prime }\right) \in f^{-1}\left(
U\right) $. Therefore, the restriction of $g$ to $f^{-1}\left( U\right) $
is a bijection from  $f^{-1}\left( U\right) $ to $f^{-1}\left( U^{\prime
}\right) $. Thus, the result follows.
\end{proof}
\begin{proposition} \label{ProductCond3}   
Let $B$ be a free submodule of $S$ of rank $\beta$. Let $\ev$ be an
element of $S^{n}$  chosen uniformly at random among elements of $S^{n}$ with envelope-rank $t$. Let $\mathcal{V}_{\ev}$ be an
envelope of $\ev$. Then,
\begin{equation*}
\Pr(\frk_{R}(\mathcal{V}_{\ev}B)=t\beta)\geq 1-tq^{t\beta-m}.
\end{equation*}
\end{proposition}

\begin{proof}
From Proposition \ref{FreeRankOverLocalRing}, $\frk_{R}(\mathcal{V}_{\ev}B)=\rk_{R/%
\mathfrak{m}}(\Psi (\mathcal{V}_{\ev})\Psi (B))$, $\rk_{R/\mathfrak{m}%
}(\Psi (\mathcal{V}_{\ev}))=t$, and $\rk_{R/\mathfrak{m}}(\Psi (B))=\beta$%
. According to Lemma \ref{ProductCond2}, $\Psi (\mathcal{V}_{\ev})$ is chosen uniformly at random among  $\mathbb{F}_{q}-$subspaces of $\mathbb{F}_{q^{m}}$ of
dimension $t$. Thus, by Lemma \ref{ProductCond1}, the result follows.
\end{proof}

\subsubsection{Probability of the Syndrome Condition}

\begin{lemma}
\cite[Lemma 2.3 ]{Dougherty2014counting}\label{CountingLinearIndependent1}
The number of ways of choosing $r$ linearly independent vectors in $R^{n}$
is
\begin{equation*}
q^{\left( \upsilon -1\right) nr}\prod\limits_{i=0}^{r-1}\left(
q^{n}-q^{i}\right).
\end{equation*}
\end{lemma}

\begin{proposition}
\label{SyndroCon} With the same notations of Algorithm \ref{DecodingAlgorithm1}, assume that $\Hm$ fulfils the unity property and the
maximal-row-span property. Suppose that the error vector $\ev\in S^{n}$ is chosen uniformly at random among all the elements of envelope-rank $t$ in $S^{n}$, with $n-k\geq t\lambda $. Let $\mathcal{E}$ be the support of $\ev$. Then
\begin{equation*}
\Pr(\mathcal{S}=\mathcal{EF})\geq \prod_{i=0}^{t\lambda -1}\left(
1-q^{i-\left( n-k\right) }\right)
\end{equation*}
\end{proposition}

\begin{proof}
Let $\mathcal{V}$ be an envelope of $\ev$ and $\{b_{1},\ldots ,b_{t}\}$ be a basis of $\mathcal{V}$. Then $\mathcal{S}\subset \mathcal{EF\subset VF}$. Thus $\Pr(\mathcal{S}=\mathcal{EF})\geq \Pr(\mathcal{S}=\mathcal{VF})$. By Proposition \ref{EnvOfVec}, there exists $\Em$ in $R^{t\times n}$ such that $\ev=\pmb{b }\mathbf{E}$ and $\rk(\row(\Em))=t$, where $\pmb{b }=(b_{1},\ldots ,b_{t})$. Let $\mathbf{h}_{i}$ be the $i-$th row of $\Hm$. Then, using the matrix $\Hm_{\ext}$ of Definition \ref{ParityCheckMatrixExt}, $\mathbf{h}_{i}$ can be decomposed as $\mathbf{h}_{i}=\mathbf{fH}_{i}$ where $\Hm_{i}$ is in $R^{\lambda \times n}$ and $\mathbf{f}=(f_{1},\ldots
,f_{\lambda })$. According to \cite{Macedo2013typing},
\begin{eqnarray*}
s_{i} &=&\mathbf{eh}_{i}^{\top }=\pmb{b }\mathbf{EH}_{i}^{\top }\mathbf{f}%
^{\top }=(\mathbf{f}\otimes \pmb{b })\Ve(\mathbf{EH}_{i}^{\top }) \\
&=&(\mathbf{f}\otimes \pmb{b })(\mathbf{I}_{\lambda }\otimes \mathbf{E)}\Ve(%
\Hm_{i}^{\top })
\end{eqnarray*}
where $\otimes $ is the Kronecker product and $\Ve \left( \Hm_{i}^{\top}\right) $ denotes the vectorization of the matrix $\Hm_{i}^{\top }$, that
is to say the matrix formed by stacking the columns of $\Hm_{i}^{\top }$ into a single column vector. So,
\begin{equation*}
\mathbf{s}=(s_{1},\ldots, s_{n-k})=(\mathbf{f}\otimes \pmb{b })(\mathbf{I}_{\lambda }\otimes \mathbf{E)K}
\end{equation*}
where
\begin{equation*}
\mathbf{K}=\left(
\begin{array}{ccc}
\Ve(\Hm_{1}^{\top }) & \cdots  & \Ve(\Hm_{n-k}^{\top })
\end{array}
\right) .
\end{equation*}
The components of $\mathbf{f}\otimes \pmb{b }$ generate $\mathcal{VF}$ and $\mathcal{S}\subset \mathcal{VF}$. Hence, by Proposition \ref{FullRowRank} if the row of $(\mathbf{I}_{\lambda }\otimes \Em)\mathbf{K}$ are linearly
independent, then $\mathcal{S}=\mathcal{VF}$. Thus,
\begin{equation*}
\Pr(\mathcal{S}=\mathcal{VF})\geq \Pr(\frk(\row((\mathbf{I}_{\lambda }\otimes
\mathbf{E)K}))=\lambda t).
\end{equation*}

Let $\Em^{\prime }$ be a random element in $R^{t\times n}$. Then $\rk(\row(\Em%
^{\prime }))\leq t$. Consequently,
\begin{equation*}
\Pr(\frk(\row((\mathbf{I}_{\lambda }\otimes \mathbf{E)K}))=\lambda t)\geq
\Pr(\frk(\row((\mathbf{I}_{\lambda }\otimes \Em^{\prime }\mathbf{)K}))=\lambda
t).
\end{equation*}%
Since $\Em^{\prime }$ is random and $\Hm$ fulfils the unity property and the
maximal-row-span property, then on can prove as in \cite[Theorem 11]%
{Renner2020low} that $(\mathbf{I}_{\lambda }\otimes \Em^{\prime }\mathbf{)K}$
is similar to a random $\lambda t\times \left( n-k\right) $ matrix. Thus,
according to Lemma \ref{CountingLinearIndependent1} the result follows.
\end{proof}

\subsubsection{Overall Probability}

\begin{theorem}
\label{BoundProbability1}Let $\mathcal{F}$ and $\Hm $ as in Definition 
\ref{LPRCcodeLocalRing}. Assume that:

$\bullet $ $\Hm $ has the unique-decoding, maximal-row-span, and unity
properties;

$\bullet$ $ \mathcal{F}$ has the square property, and the basis $\left\{
f_{1},\ldots ,f_{\lambda }\right\} $ used in Algorithm \ref%
{DecodingAlgorithm1} is a suitable basis of $\mathcal{F}$.

Let $\cv$ be a codeword of the LRPC code with parity-check matrix $\Hm $. Suppose that the error vector $\ev\in S^{n}$ is chosen uniformly at random among all the element of $S^{n}$ of envelope-rank $t$, with $t\lambda <n-k+1$ and $t\lambda \left(\lambda +1\right) /2<m$. Then, Algorithm \ref{DecodingAlgorithm1}
with input $\rv =\cv + \ev$ returns $\cv$ with a failure probability of at most:
\begin{equation*}
\Pr(\text{failure}) \leq  1-\prod_{i=0}^{t\lambda -1}\left( 1-q^{i-\left(n-k\right) }\right) +tq^{\left( t\lambda \left( \lambda
+1\right) /2\right) -m}.
\end{equation*}
\end{theorem}

\begin{proof}
Since $\mathcal{F}$ has the square property, then $\mathcal{F}^{2}$ is a free module and $\rk(\mathcal{F}^{2}):=\lambda _{2} \leq\lambda \left( \lambda +1\right) /2$. Let $\mathcal{E}$ be the support of $\ev$ and $\mathcal{V}$ be an envelope of $\ev$. Consider events $A$ and $B$ defined by $A$:=\textquotedblleft $\mathcal{S}\neq \mathcal{EF}$" and $B$:=\textquotedblleft $\frk \left( \mathcal{VF}^{2}\right) \neq t\lambda _{2}$". Then, by Corollary \ref{Coro2CorrectnessOfAlgorithm1},
\begin{equation*}
\Pr \left( \text{failure}\right) \leq \Pr \left( A\vee B\right) \leq \Pr \left(
A\right) +\Pr \left( B\right) .
\end{equation*}
Thus, the result follows thanks to Proposition \ref{ProductCond3} and Proposition \ref{SyndroCon}.
\end{proof}

The upper bound of the failure probability given in Theorem \ref{BoundProbability1} depends on the envelope-rank $t$ of the error $\ev$. According to Proposition \ref{BoundRankOverLocalRing}, we have $\rk(\ev)\leq \gamma t$ and, thanks to Remark \ref{ExEnv}, if $R$ is a finite chain ring, then $\rk(\ev)=t$. Thus, Theorem \ref{BoundProbability1} improves the bound given in \cite[Theorem 15]{Renner2020low} for the case where $R$ is a Galois ring. We can also observe that the bound of Theorem \ref{BoundProbability1} depends on the cardinal $q$ of the residual field of $R$. Thus, the upper bound of the failure probability over finite local rings is the same as the upper bound over its residual field. In other words, the upper bound is the same as in the case of finite fields.

\subsection{Simulation Results}

The simulations are done over $R=\mathbb{Z}_{8}$ and we use the same
parameters as in \cite{Renner2019efficient}, i.e., $\lambda =2$, $k=16$, $n=32$, and $m=30$. The simulation results are in Figure \ref{SimulationResult}. One can observe that the theoretical bound given in Theorem \ref{BoundProbability1} is a good approximation of the practical decoding failure probability.

\bigskip

\begin{figure}[tbp]
\caption{The decoding failure probability for $\protect\lambda =2$, $k=16$, $n=32$, $m=30$, and $R=\mathbb{Z}_{8}$.}
\label{SimulationResult}\includegraphics[width=12cm]{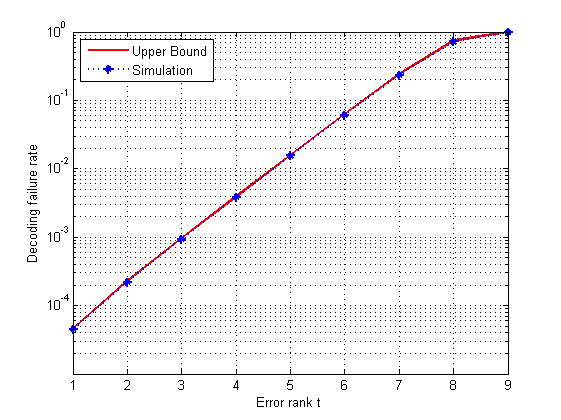}
\end{figure}

\section{LRPC Codes Over Arbitrary Finite Rings}\label{sec:lrpc_fr}

In this section, we assume that $R$ is a finite commutative ring. By \cite[
Theorem VI.2]{Mcdonald1974finite} each finite commutative ring can be
decomposed as a direct sum of local rings. Thus, there is a positive
integer $\rho $ such that $R \cong R_{(1)}\times \cdots \times R_{(\rho )}$,
where each $R_{(j)}$ is a finite commutative local ring, for $j=1,\ldots ,\rho $.
Using this isomorphism, we identify $R$ with $R_{(1)}\times \cdots \times
R_{(\rho )}$. Let $j \in \left\{ 1,\ldots ,\rho \right\} $, we denote by $%
\mathfrak{m}_{(j)}$ the maximal ideal of $R_{(j)}$, $\mathbb{F}%
_{q_{(j)}}=R_{(j)}/\mathfrak{m}_{(j)}$ its residue field and $\upsilon
_{(j)} $ the positive integer such that $\left\vert R_{(j)}\right\vert
=q_{(j)}^{\upsilon _{(j)}}$. 

Let $m$ be a positive integer and $j \in \left\{ 1,\ldots ,\rho \right\} $.
By \cite{Mcdonald1974finite}, $R_{(j)}$ admits a Galois extension
of dimension $m$ that we denote by $S_{(j)}$. Moreover, $S_{(j)}$ is also a local
ring, with maximal ideal $\mathfrak{M}_{(j)}=\mathfrak{m}_{(j)}S_{(j)}$
and residue field $\mathbb{F}_{q_{(j)}^{m}}$. By setting $S=S_{(1)}\times
\cdots \times S_{(\rho )}$, $S$ is a free $R-$module of dimension $m$.
Let $\Phi _{(j)}:S\longrightarrow S_{(j)}$ be the $j-$th projection map, for
$j=1,\ldots ,\rho $. We extend $\Phi _{(j)}$ coefficient-by-coefficient as a
map from $S^{n}$ to $S_{(j)}^{n}$. We also extend $\Phi _{(j)}$
coefficient-by-coefficient as a map from $S^{m\times n}$ to $%
S_{(j)}^{m\times n}$.


\begin{proposition}
\label{LocalizationOfRank} Let $N$ be a submodule of $R^{n}$. Set $%
N_{(j)}=\Phi _{(j)}\left( N\right) $, for $j=1,\ldots ,\rho $. Then,

(i) $\rk_{R}\left( N\right) =\max_{1\leq i\leq \rho }\left\{
\rk_{R_{(j)}}\left( N_{(j)}\right) \right\} $.

(ii) $N$ is a free $R-$module if and only if each $N_{(j)}$ is a free $%
R_{(j)}-$module with the same rank.

(iii) $\frk_{R}\left( N\right) =\min_{1\leq j\leq \rho }\left\{
\frk_{R_{(j)}}\left( N_{(j)}\right) \right\} $.
\end{proposition}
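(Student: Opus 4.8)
The plan is to reduce all three statements to the product structure that the decomposition $R \cong R_{(1)}\times \cdots \times R_{(\rho)}$ imposes on $N$. First I would establish the key structural fact that, as a subset of $R^{n}=R_{(1)}^{n}\times \cdots \times R_{(\rho)}^{n}$, one has $N = N_{(1)}\times \cdots \times N_{(\rho)}$. For each $j$ let $e_{j}\in R$ be the idempotent of the $j$-th factor, i.e. $\Phi_{(j)}(e_{j})=1$ and $\Phi_{(i)}(e_{j})=0$ for $i\neq j$. Since $N$ is an $R$-submodule, $e_{j}x\in N$ for every $x\in N$; but $e_{j}x$ is zero outside the $j$-th block and equals $\Phi_{(j)}(x)$ there. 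Because $\sum_{j}e_{j}=1$, every $x\in N$ satisfies $x=\sum_{j}e_{j}x$, so $N=\bigoplus_{j}e_{j}N$ and $e_{j}N$ is exactly the copy of $N_{(j)}$ living in the $j$-th block. This closure of $N$ under the idempotents is the one genuinely essential point, and I expect it to be the conceptual crux: once $N$ is literally a product, the rest is bookkeeping.

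For (i) I would count generators in both directions. If $g_{1},\ldots ,g_{s}$ generate $N$, then $\Phi_{(j)}(g_{1}),\ldots ,\Phi_{(j)}(g_{s})$ generate $N_{(j)}$, so $rk_{R_{(j)}}(N_{(j)})\le s$ for all $j$, giving $\max_{j} rk_{R_{(j)}}(N_{(j)})\le rk_{R}(N)$. Conversely, set $r=\max_{j} rk_{R_{(j)}}(N_{(j)})$ and choose for each $j$ a generating family $g_{j,1},\ldots ,g_{j,r}$ of $N_{(j)}$ (padding with zeros). The elements $g_{i}=(g_{1,i},\ldots ,g_{\rho,i})$, for $i=1,\ldots ,r$, lie in $N$ by the product structure, and any $(x_{1},\ldots ,x_{\rho})\in N$ is reached by taking coefficients $a_{i}=(a_{1,i},\ldots ,a_{\rho,i})\in R$ read off from the expansions $x_{j}=\sum_{i}a_{j,i}g_{j,i}$. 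Hence $rk_{R}(N)\le r$, which closes (i).

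For (iii) I would use Proposition \ref{LinearIndependentCondition1}. If $L\subseteq N$ is free of rank $f$ with basis $\{b_{1},\ldots ,b_{f}\}$, then by that proposition each $\{\Phi_{(j)}(b_{1}),\ldots ,\Phi_{(j)}(b_{f})\}$ is $R_{(j)}$-linearly independent, so $frk_{R_{(j)}}(N_{(j)})\ge f$ and therefore $\min_{j} frk_{R_{(j)}}(N_{(j)})\ge frk_{R}(N)$. Conversely, set $f=\min_{j} frk_{R_{(j)}}(N_{(j)})$ and pick in each $N_{(j)}$ a linearly independent family $b_{j,1},\ldots ,b_{j,f}$. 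The vectors $b_{i}=(b_{1,i},\ldots ,b_{\rho,i})\in N$ are $R$-linearly independent, again by Proposition \ref{LinearIndependentCondition1}, so $frk_{R}(N)\ge f$, proving (iii).

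Finally, for (ii) I would invoke Corollary \ref{FreeModuleCondition}, which characterizes freeness of $N$ by $frk_{R}(N)=rk_{R}(N)$, and feed in (i) and (iii). If every $N_{(j)}$ is free of the same rank $d$, then $frk_{R_{(j)}}(N_{(j)})=rk_{R_{(j)}}(N_{(j)})=d$, whence $frk_{R}(N)=rk_{R}(N)=d$ and $N$ is free. Conversely, if $N$ is free then $\min_{j} frk_{R_{(j)}}(N_{(j)})=\max_{j} rk_{R_{(j)}}(N_{(j)})$; since $frk_{R_{(j)}}(N_{(j)})\le rk_{R_{(j)}}(N_{(j)})$ for each $j$, all of these inequalities are squeezed to equalities, forcing each $N_{(j)}$ to be free (again by Corollary \ref{FreeModuleCondition}) of the common rank $d$. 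The only nonroutine ingredient throughout is the product decomposition of $N$ set up in the first step.
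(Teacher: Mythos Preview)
Your argument is correct. The one step you assert without proof is the inequality $frk_{R_{(j)}}(N_{(j)})\le rk_{R_{(j)}}(N_{(j)})$ used in the converse of (ii); over finite rings this follows at once from cardinalities (a free submodule of rank $f$ forces $|R_{(j)}|^{f}\le |N_{(j)}|\le |R_{(j)}|^{rk(N_{(j)})}$), so it is harmless.

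Your route differs from the paper's mainly in logical order. The paper does not argue (i) and (ii) at all: it simply cites \cite[Corollary~2.5]{Dougherty2009mds} for them and then derives (iii) from (ii), applying (ii) to a maximal free submodule $L\subseteq N$ in one direction, and in the other direction assembling a free $L=L_{(1)}\times\cdots\times L_{(\rho)}$ inside $N$ from free $R_{(j)}$-submodules $L_{(j)}\subseteq N_{(j)}$ of the minimal rank. You instead prove (i) and (iii) directly via the idempotent/product decomposition and Proposition~\ref{LinearIndependentCondition1}, and then obtain (ii) from Corollary~\ref{FreeModuleCondition} combined with (i) and (iii). Both arguments ultimately rest on the same structural fact $N=N_{(1)}\times\cdots\times N_{(\rho)}$; your version is more self-contained, while the paper's is shorter because it outsources the two harder parts to the literature.
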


\begin{proof}
(i) and (ii) have been proven in \cite[Corollary 2.5]{Dougherty2009mds}.
(iii) follows from (ii).
\end{proof}

By Proposition \ref{LocalizationOfRank}, the computation of the rank and
free-rank over finite rings is reduced to local rings. As in Definition \ref{LPRCcodeLocalRing}, LRPC codes can be defined over the
arbitrary finite ring $S$. More precisely, we have the following:

\begin{definition}
\label{LPRCcodes} Let $k$, $n$, $\lambda $ be positive integers with $0<k<n$.
A low-rank parity-check code $\mathcal{C}$ over $S$, with parameters $k$, $n$, $%
\lambda $ is a code with a parity-check matrix $\Hm=\left( h_{i,j}\right) \in S^{(n-k)\times n}$ such
that $\frk_{S}\left( \row \left( \Hm\right) \right) =n-k$ and $\mathcal{%
\ F=}\left\langle h_{1,1},\ldots ,h_{\left( n-k\right) ,n}\right\rangle _{R}$ is a free $R-$submodule of $S$ with rank $%
\lambda$.
\end{definition}

In the remaining of this section, we assume that $\mathcal{C}$, $\mathcal{F}$
and $\Hm$ are defined as in Definition \ref{LPRCcodes}. Set $\mathcal{%
C}_{(j)}\mathcal{\ }=\Phi _{(j)}\left( \mathcal{C}\right) $, $\mathcal{F}%
_{(j)}=\Phi _{(j)}\left( \mathcal{F}\right) $ and $\Hm_{(j)}=\Phi
_{(j)}\left( \Hm\right) $, for $j=1,\ldots ,\rho $. By Proposition %
\ref{LocalizationOfRank}, we have the following:

\begin{proposition}
Let $j\in \left\{ 1,\ldots ,\rho \right\} $. Then,

(a) $\mathcal{F}_{(j)}$ is a free $R_{(j)}-$submodule of $S_{(j)}$ of rank $%
\lambda $.

(b) $\frk_{S_{(j)}}\left( \row \left( \Hm_{(j)}\right) \right) =n-k$ and
the entries of $\Hm_{(j)}$ generate $\mathcal{F}_{(j)}$.

(c) $\mathcal{C}_{(j)}$ is an LRPC code over $S_{(j)}$ with parameters $k$, $%
n $, $\lambda $ and a parity-check matrix $\Hm_{(j)}$.
\end{proposition}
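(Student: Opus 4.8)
The plan is to obtain all three parts from Proposition~\ref{LocalizationOfRank}, applied once to the $R$-module structure and once to the ring $S$ itself. The enabling observation is that $S=S_{(1)}\times\cdots\times S_{(\rho)}$ is again a finite commutative ring whose local factors are the $S_{(j)}$, and the maps $\Phi_{(j)}$ are exactly the associated projections; hence Propositions~\ref{LinearIndependentCondition1} and \ref{LocalizationOfRank} hold verbatim with $R$ replaced by $S$. Moreover, since each $\Phi_{(j)}$ is a homomorphism over the ring map $R\rightarrow R_{(j)}$, the image of a generated submodule is the submodule generated by the images; in particular $\mathcal{F}_{(j)}=\Phi_{(j)}(\mathcal{F})$ and $row\left(\Hm_{(j)}\right)=\Phi_{(j)}\left(row\left(\Hm\right)\right)$. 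For part~(a), I start from the hypothesis that $\mathcal{F}$ is a free $R$-module of rank $\lambda$: by Proposition~\ref{LocalizationOfRank}(ii) each $\mathcal{F}_{(j)}$ is then a free $R_{(j)}$-module, all of one common rank, and by Proposition~\ref{LocalizationOfRank}(i) that common rank equals $rk_{R}\left(\mathcal{F}\right)=\lambda$; thus $\mathcal{F}_{(j)}$ is a free $R_{(j)}$-submodule of $S_{(j)}$ of rank $\lambda$.

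For part~(b), I apply Proposition~\ref{LocalizationOfRank}(iii) to the $S$-submodule $row\left(\Hm\right)\subseteq S^{n}$, which gives $n-k=frk_{S}\left(row\left(\Hm\right)\right)=\min_{1\le j\le\rho}frk_{S_{(j)}}\left(row\left(\Hm_{(j)}\right)\right)$. On the other hand, each $row\left(\Hm_{(j)}\right)$ is generated by its $n-k$ rows, so both its rank and, a fortiori, its free rank are at most $n-k$. A minimum of quantities that are each at most $n-k$ but equals $n-k$ forces every one of them to equal $n-k$, whence $frk_{S_{(j)}}\left(row\left(\Hm_{(j)}\right)\right)=n-k$ for all $j$. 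The second assertion of~(b) is immediate: applying $\Phi_{(j)}$ to $\mathcal{F}=\left\langle h_{1,1},\ldots,h_{\left(n-k\right),n}\right\rangle_{R}$ yields $\mathcal{F}_{(j)}=\left\langle \Phi_{(j)}\left(h_{1,1}\right),\ldots,\Phi_{(j)}\left(h_{\left(n-k\right),n}\right)\right\rangle_{R_{(j)}}$, and these generators are precisely the entries of $\Hm_{(j)}$.

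For part~(c), recall that $\mathcal{C}=\left\{\cv\in S^{n}:\Hm\cv^{T}=\zerovec\right\}$ is the code cut out by its parity-check matrix. Under the identification $S^{n}\cong\prod_{j}S_{(j)}^{n}$ a vector $\cv$ corresponds to $\left(\Phi_{(1)}\left(\cv\right),\ldots,\Phi_{(\rho)}\left(\cv\right)\right)$, and because each $\Phi_{(j)}$ is a ring homomorphism it commutes with the matrix--vector product: $\Phi_{(j)}\left(\Hm\cv^{T}\right)=\Hm_{(j)}\Phi_{(j)}\left(\cv\right)^{T}$. Hence $\Hm\cv^{T}=\zerovec$ if and only if $\Hm_{(j)}\Phi_{(j)}\left(\cv\right)^{T}=\zerovec$ for every $j$, so under the above identification $\mathcal{C}$ is the product $\prod_{j}\left\{\yv\in S_{(j)}^{n}:\Hm_{(j)}\yv^{T}=\zerovec\right\}$. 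Projecting onto the $j$-th factor, which is surjective since $\zerovec$ lies in each kernel, gives $\mathcal{C}_{(j)}=\Phi_{(j)}\left(\mathcal{C}\right)=\left\{\yv\in S_{(j)}^{n}:\Hm_{(j)}\yv^{T}=\zerovec\right\}$. Together with parts~(a) and~(b), which show that $\mathcal{F}_{(j)}$ and $\Hm_{(j)}$ meet the requirements of Definition~\ref{LPRCcodeLocalRing} over $S_{(j)}$, this identifies $\mathcal{C}_{(j)}$ as the LRPC code over $S_{(j)}$ with parameters $k$, $n$, $\lambda$ and parity-check matrix $\Hm_{(j)}$.

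The step requiring the most care is the transport of Proposition~\ref{LocalizationOfRank} from $R$ to $S$, together with the identification $row\left(\Hm_{(j)}\right)=\Phi_{(j)}\left(row\left(\Hm\right)\right)$ and the matching of the two readings of $\Phi_{(j)}$: as the projection of the $S$-module $S^{n}$ and as the coordinate map on the $R$-module $S\cong R^{m}$. Once these identifications are pinned down, every remaining step is a direct invocation of an already-established localization identity. The kernel decomposition used in~(c) is routine, but it must be stated explicitly, since it is what upgrades the module-theoretic conclusions of~(a) and~(b) into the assertion that $\mathcal{C}_{(j)}$ is genuinely the code cut out by $\Hm_{(j)}$.
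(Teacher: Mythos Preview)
Your proof is correct and follows exactly the route the paper indicates: the paper simply declares the proposition to be a consequence of Proposition~\ref{LocalizationOfRank}, and you have carefully spelled out how each of~(a),~(b),~(c) is extracted from that proposition (together with the observation that the same localization result applies with $S$ in place of $R$). There is no substantive difference in approach, only in level of detail.
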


Since the projection of LRPC codes over a local ring is also LRPC codes, we
will use Algorithm \ref{DecodingAlgorithm1} to give a decoding algorithm over
the arbitrary finite ring.

\begin{algorithm}
\label{DecodingAlgorithm2}
\caption{Decoding LRPC Codes Over Finite Rings}
\DontPrintSemicolon
\KwIn{

$\bullet $ LRPC parity-check matrix $\Hm$ (as in Definition \ref{LPRCcodes})

$\bullet $ a basis $\left\{ f_{1},\ldots ,f_{\lambda }\right\} $ of $
\mathcal{F}$

$\bullet $ $\rv=\cv+\ev$, such that $\cv$ is in
the LRPC code
\newline $\mathcal{C}$ given by $\Hm$ and $\ev \in S^{n}$.
}

\KwOut{Codeword $\cv^{\prime }$ of $\mathcal{C}$ or
\textquotedblleft decoding failure".
}

\eIf{for $j=1,\ldots ,\rho $, Algorithm \ref{DecodingAlgorithm1} with input
\newline
$\Phi _{(j)}\left( \Hm\right) $,\ $\left\{ \Phi _{(j)}\left(
f_{1}\right) ,\ldots ,\Phi _{(j)}\left( f_{\lambda }\right) \right\} $ and $
\Phi _{(j)}\left( \rv\right) $
\newline returns $\ev_{(j)}^{\prime }$
\newline
}{\Return{$\rv-\ev^{\prime }$}
\newline where $\ev^{\prime }$ is the element of $S^n$
\newline such that $\Phi
_{(j)}\left( \ev^{\prime }\right) =\ev_{(j)}^{\prime }$
}{\Return{\textquotedblleft decoding failure"}
}
\end{algorithm}

\begin{theorem}
\label{CorrectnessOfAlgorithm2} Let $R_{0(j)}$ be the maximal Galois sub-ring of $R_{(j)}$ and $\gamma_{(j)}=\rk_{R_{0(j)}}(R_{(j)})$, for $j=1,\ldots ,\rho$, and $\gamma=max_j\{\gamma_{(j)}\}$. Let $\mathcal{F}$ and $\Hm $ as in Definition \ref{LPRCcodes} and assume that :

$\bullet $ $\Hm _{(j)}$ has the unique-decoding, maximal-row-span, and unity
properties;

$\bullet$ $ \mathcal{F}_{(j)}$ has the square property and the basis $\left\{f_{(j),1},\ldots ,f_{(j),\lambda} \right\} $ used in Algorithm \ref%
{DecodingAlgorithm1} is a suitable basis of $\mathcal{F}_{(j)}$, with $\rk(\mathcal{F}_{(j)}^{2}) :=\lambda _{(j),2}$. 

Suppose that the error vector $\ev_{(j)}\in S_{(j)}^{n}$ is chosen uniformly at random among all the element of $S_{(j)}^{n}$ of  envelope-rank $t_{(j)}$, with $t_{(j)}\lambda <n-k+1$ and $t_{(j)}\lambda \left(\lambda +1\right) /2<m$. Let $\mathcal{E}_{(j)}$ be the support of $\ev_{(j)}$, $\mathcal{V}_{(j)}$ an envelope of $\ev_{(j)}$, and $\mathcal{S}_{(j)}$ the support of the syndrome of $\ev_{(j)}$. Let $\cv$ be a code word of the LRPC code  with a parity-check matrix $\Hm$ and $\ev\in S^{n}$ such that $\Phi _{(j)}\left( \ev\right) =\ev_{(j)}$, for $j=1,\ldots ,\rho $. Then,

\begin{itemize}
    \item[(1)]  Algorithm \ref{DecodingAlgorithm2} with input $\rv = \cv + \ev$ returns $\cv$ if the following conditions are fulfilled:
    \begin{itemize}
        \item[(i)] $\mathcal{S}_{(j)}=\mathcal{E}_{(j)}\mathcal{F}_{(j)}$;
        \item[(ii)] $\frk \left( \mathcal{V}_{(j)}\mathcal{F}_{(j)}^{2}\right) =t\lambda _{(j),2}$.
    \end{itemize}
    \item[(2)] Algorithm \ref{DecodingAlgorithm2} with input $\rv= \cv+\ev$ returns $\cv$  with a failure probability upper bounded as follows : 
    \begin{equation*}
    \Pr(\text{failure}) \leq  \sum_{j=1}^{\rho} \left(1-\prod_{i=0}^{t_{(j)}\lambda -1}\left( 1-q_{(j)}^{i-\left(n-k\right) }\right) +t_{(j)}q_{(j)}^{\left( t_{(j)}\lambda \left( \lambda+1\right) /2\right) -m}\right).
    \end{equation*} 
     \item[(3)] The complexity of Algorithm \ref{DecodingAlgorithm2} in terms of operations over $R$ is
     \begin{equation*}
     \bigO(\rho \lambda^2 \gamma^{4}n^{2}m^{2} \max{\{n,m\}})).
     \end{equation*} 
\end{itemize}
\end{theorem}

\begin{proof}
(1) The proof is a direct consequence of Corollary \ref{Coro2CorrectnessOfAlgorithm1}.

(2) Algorithm \ref{DecodingAlgorithm2} fails if and only if Algorithm \ref{DecodingAlgorithm1} fails at least once when applied over the $S_{(j)}$. Thus, by Theorem \ref{BoundProbability1}, the result follows.

(3) The decoding Algorithm \ref{DecodingAlgorithm2} consists in the application of Algorithm \ref{DecodingAlgorithm1} in the LRPC codes $\mathcal{C}_{(j)}$, for $j=1,\ldots ,\rho .$ Thus by Theorem \ref{ComplexityLocalRing}, the result follows.
\end{proof}

\section*{Acknowledgements}
Hermann Tchatchiem Kamche is funded by the Swiss Government Excellence Scholarship (ESKAS No. 2022.0689). 
Herv\'e Tal\'e Kalachi is funded by the UNESCO-TWAS programme, "Seed Grant for African Principal Investigators" financed by the German Federal Ministry of Education and Research (BMBF) under the SG-NAPI grant number $4500454079$. 
Emmanuel Fouotsa is funded by the Swedish International Development Cooperation Agency (Sida) under the grant number 20-063 RG/MATHS/AF/AC$\_$I-FR3240314130.

\bibliographystyle{abbrv}
\bibliography{main}
\newpage

\appendix

\section{Example of Intersection and Product of Submodules}
\label{Example1Appendix}

Set $R=
\mathbb{Z}
_{4}$, $S=R\left[ \theta \right] =R\left[ X\right] /\left(
X^{5}+X^{2}+1\right) $,
\begin{equation*}
A=\left\langle 3\theta ^{3}+2\theta +3,2\theta ^{4}+2\theta ^{3}+3\theta
+1\right\rangle ,
\end{equation*}

and
\begin{equation*}
B=\left\langle \theta ^{4}+2\theta ^{3}+1,2\theta ^{4}+3\theta ^{3}+2\theta
+3\right\rangle .
\end{equation*}

\begin{enumerate}
\item The matrix whose rows are vector representations in the basis $\left(
1,\theta ,\theta ^{2},\theta ^{3},\theta ^{4}\right) $ of the generators of $%
A$ is
\begin{equation*}
\Mm_{A}=\left(
\begin{array}{rrrrr}
3 & 2 & 0 & 3 & 0 \\
1 & 3 & 0 & 2 & 2%
\end{array}%
\right) .
\end{equation*}
Using elementary row operations, the matrix $\Mm_{A}$ is equivalent
to
\begin{equation*}
\widetilde{\Mm_{A}}=\left(
\begin{array}{rrrrr}
1 & 2 & 0 & 1 & 0 \\
0 & 1 & 0 & 1 & 2%
\end{array}%
\right) .
\end{equation*}%
Thus, by Proposition \ref{FreeModuleTest2}, $A$ is a free module of rank $2$.

\item The matrix whose rows are vector representations in the basis $\left(
1,\theta ,\theta ^{2},\theta ^{3},\theta ^{4}\right) $ of the generators of $%
B$ is
\begin{equation*}
\Mm_{B}=\left(
\begin{array}{rrrrr}
1 & 0 & 0 & 2 & 1 \\
3 & 2 & 0 & 3 & 2%
\end{array}%
\right) .
\end{equation*}%
Using elementary row operations, the matrix $\Mm_{B}$ is equivalent
to
\begin{equation*}
\widetilde{\Mm_{B}}=\left(
\begin{array}{rrrrr}
1 & 0 & 0 & 2 & 1 \\
0 & 2 & 0 & 1 & 3%
\end{array}%
\right) .
\end{equation*}%
Thus, by Proposition \ref{FreeModuleTest2}, $B$ is a free module of rank $2$.

\item We have
\begin{equation*}
A+B=\left\langle a, b, c, d  \right\rangle
\end{equation*}%
with $a = 3\theta ^{3}+2\theta +3,  b = 2\theta ^{4}+2\theta ^{3}+3\theta
+1,  c= \theta ^{4}+2\theta ^{3}+1, d= 2\theta ^{4}+3\theta ^{3}+2\theta
+3.$
The matrix whose rows are vector representations in the basis $\left(
1,\theta ,\theta ^{2},\theta ^{3},\theta ^{4}\right) $ of the generators of $%
A+B$ is
\begin{equation*}
\Mm_{A+B}=\left(
\begin{array}{rrrrr}
3 & 2 & 0 & 3 & 0 \\
1 & 3 & 0 & 2 & 2 \\
1 & 0 & 0 & 2 & 1 \\
3 & 2 & 0 & 3 & 2%
\end{array}%
\right) .
\end{equation*}%
Using elementary row operations, the matrix $\Mm_{A+B}$ is equivalent
to
\begin{equation*}
\widetilde{\Mm_{A+B}}=\left(
\begin{array}{rrrrr}
1 & 2 & 0 & 1 & 0 \\
0 & 1 & 0 & 1 & 2 \\
0 & 0 & 0 & 1 & 3 \\
0 & 0 & 0 & 0 & 2%
\end{array}%
\right) .
\end{equation*}%
Thus, by Proposition \ref{FreeModuleTest2}, $\frk_{R}\left( A+B\right) =3$ $\
$ and $A+B$ is not a free module.

\item We have
\begin{equation*}
A\cap B=\left\langle 2\theta ^{3}+2\right\rangle
\end{equation*}%
and, by Proposition \ref{FreeModuleTest2}, $A\cap B$ is not a free module.

\item We have
\[
AB=\left\langle a, b, c, d \right\rangle
\]
with $a=3\theta^{3}+3\theta^{2}+1$, $b=\theta ^{3}+2\theta
^{2}+3\theta +1$, $c = \ 3\theta ^{4}+2\theta ^{3}+\theta^{2} + 3 \theta$,
 $d= 3\theta^{4}+3\theta ^{3}+2\theta ^{2}+\theta +1$
and, by Proposition \ref{FreeModuleTest2}, $AB$ is not a free module.
\end{enumerate}


\end{document}